\documentclass[twocolumn,english]{svjour3}
\usepackage[T1]{fontenc}
\usepackage[latin9]{inputenc}
\usepackage[active]{srcltx}
\usepackage{array}
\usepackage{float}
\usepackage{textcomp}
\usepackage{url}
\usepackage{multirow}
\usepackage{amsmath}
\usepackage{amssymb}
\usepackage{graphicx}

\makeatletter

\providecommand{\tabularnewline}{\\}
\newcommand{\lyxdot}{.}

\floatstyle{ruled}

\newfloat{algorithm}{tbp}{loa}[chapter]
\providecommand{\algorithmname}{Algorithm}
\floatname{algorithm}{\protect\algorithmname}

\usepackage{cite}
\usepackage{doi}

\makeatother

\usepackage{babel}

\begin{document}

\title{Realtime Active Sound Source Localization for Unmanned Ground Robots
Using a Self-Rotational Bi-Microphone Array}

\author{Deepak Gala, Nathan Lindsay, Liang Sun}

\institute{Deepak Gala, drgala@nmsu.edu, New Mexico State University, NM, USA;
Nathan Lindsay, nl22@nmsu.edu, New Mexico State University, NM, USA;
Liang Sun, lsun@nmsu.edu, New Mexico State University, NM, USA}

\maketitle
This work presents a novel technique that performs both orientation
and distance localization of a sound source in a three-dimensional
(3D) space using only the interaural time difference (ITD) cue, generated
by a newly-developed self-rotational bi-microphone robotic platform.
The system dynamics is established in the spherical coordinate frame
using a state-space model. The observability analysis of the state-space
model shows that the system is unobservable when the sound source
is placed with elevation angles of $90$ and $0$ degree. The proposed
method utilizes the difference between the azimuth estimates resulting
from respectively the 3D and the two-dimensional models to check the
zero-degree-elevation condition and further estimates the elevation
angle using a polynomial curve fitting approach. Also, the proposed
method is capable of detecting a $90$-degree elevation by extracting
the zero-ITD signal 'buried' in noise. Additionally, a distance localization
is performed by first rotating the microphone array to face toward
the sound source and then shifting the microphone perpendicular to
the source-robot vector by a predefined distance of a fixed number
of steps. The integrated rotational and translational motions of the
microphone array provide a complete orientation and distance localization
using only the ITD cue. A novel robotic platform using a self-rotational
bi-microphone array was also developed for unmanned ground robots
performing sound source localization. The proposed technique was first
tested in simulation and was then verified on the newly-developed
robotic platform. Experimental data collected by the microphones installed
on a KEMAR dummy head were also used to test the proposed technique.
All results show the effectiveness of the proposed technique. 

\section{INTRODUCTION}

The localization problem in the robotic field has been recognized
as the most fundamental problem to make robots truly autonomous~\cite{Borenstein1996}.
Localization techniques are of great importance for autonomous unmanned
systems to identify their own locations (i.e., self-localization)
and situational awareness (e.g., locations of surrounding objects),
especially in an unknown environment. Mainstream technology for localization
is based on computer vision, supported by visual sensors (e.g., cameras),
which, however, are subject to lighting and line-of-sight conditions
and rely on computationally demanding image-processing algorithms.
An acoustic sensor (e.g., a microphone), as a complementary component
in a robotic sensing system, does not require a line of sight and
is able to work under varying light (or completely dark) conditions
in an omnidirectional manner. Thanks to the advancement of microelectromechanical
technology, microphones become inexpensive and do not require significant
power to operate. 

Sound-source localization (SSL) techniques have been developed that
identify the location of sound sources (e.g., speech and music) in
terms of directions and distances. SSL techniques have been widely
used in civilian applications, such as intelligent video conferencing~\cite{huang2000passive,wang1997voice},
environmental monitoring~\cite{tiete2014soundcompass}, human-robot
interaction (HRI) for humanoid robotics~\cite{hornstein2006sound},
and robot motion planning~\cite{NguyenColasVincentEtAl2017}, as
well as military applications, such as passive sonar for submarine
detections, surveillance systems that locate hostile tanks, artillery,
incoming missiles~\cite{kaushik2005review}, aircraft~\cite{blumrich2000medium},
and UAVs~\cite{brandes2007sound}. SSL techniques have great potential
by itself to enhance the sensing capability of autonomous unmanned
systems as well as working together with vision-based localization
techniques. 

SSL has been achieved by using microphone arrays with more than two
microphones~\cite{TamaiSasakiKagamiEtAl2005,TamaiKagamiAmemiyaEtAl2004,SturimBrandsteinSilverman1997,ValinMichaudRouatEtAl2016,omologo1996acoustic}.
The accuracy of the localization techniques based on microphone arrays
is dictated by their physical sizes~\cite{brandstein2013microphone,benesty2008microphone,Zietlow2017}.
Microphone arrays are usually designed using particular (e.g., linear
or circular) structures, which result in their relatively large sizes
and sophisticated control components for operation. Therefore, it
becomes difficult to use them on small robots nor large systems due
to the complexity of mounting and maneuvering. 

In the past decade, research has been carried out for robots to have
auditory behaviors (e.g. getting attention to an event, locating a
sound source in potentially dangerous situations, and locating and
paying attention to a speaker) by mimicking human auditory systems.
Humans perform sound localization with their two ears using integrated
three types of cues, i.e., the interaural level difference (ILD),
the interaural time difference (ITD), and the spectral information~\cite{goldstein2016sensation,middlebrooks1991sound}.
ILD and ITD cues are usually used respectively to identify the horizontal
location (i.e., azimuth angle) of a sound source with higher and lower
frequencies. Spectral cues are usually used to identify the vertical
location (i.e., elevation angle) of a sound source with higher frequencies.
Additionally, acoustic landmarks aid towards bettering the SSL by
humans\cite{Zhong2015}. 

To mimic human acoustic systems, researchers have developed sound
source localization techniques using only two microphones. All three
types of cues have been used by Rodemann et al.~\cite{RodemannInceJoublinEtAl2008}
in a binaural approach of estimating the azimuth angle of a sound
source, while the authors also stated that reliable elevation estimation
would need a third microphone. Spectral cues were used by the head-related-transfer-function
(HRTF) that was applied to identify both the azimuth and elevation
angles of a sound source for binaural sensor platforms~\cite{Keyrouz2014,gill2000auditory,hornstein2006sound,KeyrouzDiepold2006}.
The ITD cues have also been used in binaural sound source localization~\cite{chen2006time},
where the problem of cone of confusion~\cite{Wallach1939} has been
overcome by incorporating head movements, which also enable both azimuth
and elevation estimation~\cite{Wallach1939,perrett1997effect}. Lu
et al.~\cite{lu2011motion} used a particle filter for binaural tracking
of a mobile sound source on the basis of ITD and motion parallax but
the localization was limited in a two-dimensional (2D) plane and was
not impressive under static conditions. Pang et al.~\cite{PangLiuZhangEtAl2017}
presented an approach for binaural azimuth estimation based on reverberation
weighting and generalized parametric mapping. Lu et al.~\cite{lu2007active}
presented a binaural distance localization approach using the motion-induced
rate of intensity change which requires the use of parallax motion
and errors up to 3.4 m were observed. Kneip and Baumann~\cite{LaurentKneip2008}
established formulae for binaural identification of the azimuth and
elevation angles as well as the distance information of a sound source
combining the rotational and translational motion of the interaural
axis. However, large localization errors were observed and no solution
was given to handle sensor noise nor model uncertainty. Rodemann~\cite{Rodemann2010}
proposed a binaural azimuth and distance localization technique using
signal amplitude along with ITD and ILD cues in an indoor environment
with a sound source ranging from $0.5\;\text{m}$ to $6\;\text{m}$.
However, the azimuth estimation degrades with the distance and reduced
error with the required calibration was still large. Kumon and Uozumi~\cite{kumon2011binaural}
proposed a binaural system on a robot to localize a mobile sound source
but it requires the robot to move with a constant velocity to achieve
2D localization. Also, further study was proposed for a parameter
$\alpha_{0}$ introduced in the EKF. Zhong et al.~\cite{Sun2015,zhong2016active}
and Gala et al.~\cite{Gala2018} utilized the extended Kalman filtering
(EKF) technique to perform orientation localization using the ITD
data acquired by a set of binaural self-rotating microphones. Moreover,
large errors were observed in~\cite{zhong2016active} when the elevation
angle of a sound source was close to zero. 

To the best of our knowledge, the works presented in the literature
for SSL using two microphones based on ITD cues mainly provided formulae
that calculate the azimuth and elevation angles of a sound source
without incorporating sensor noise~\cite{LaurentKneip2008}. The
works that use probabilistic recursive filtering techniques (e.g.,
EKF) for orientation estimation~\cite{zhong2016active} did not conduct
any observability analysis on the system dynamics. In other words,
no discussion on the limitation of the techniques for orientation
estimation was found. In addition, no probabilistic recursive filtering
technique was used to acquire distance information of a sound source.
This paper aims to address these research gaps. 

The contributions of this paper include (1) an observability analysis
of the system dynamics for three-dimensional (3D) SSL using two microphones
and the ITD cue only; (2) a novel algorithm that provides the estimation
of the elevation angle of a sound source when the states are unobservable;
and (3) a new EKF-based technique that estimates the robot-sound distance.
Both simulations and experiments were conducted to validate the proposed
techniques. 

The rest of this paper is organized as follows. Section~\ref{sec:PRELIMINARIES}
describes the preliminaries. In Section~\ref{sec:Angular-Localization},
2D and 3D orientation localization models are presented along with
their observability analysis. In Section~\ref{sec:Complete-Solution},
a novel method is proposed to detect non-observability conditions
and a solution to the non-observability problem is presented. Section~\ref{sec:Distance-Localization}
presents a distance localization model with its observability analysis.
The EKF algorithm is presented in Section~\ref{sec:EKF}. In Sections~\ref{sec:Simulation-Results}
and~\ref{sec:Experimental-Results}, the simulation and experimental
results are presented respectively, followed by Section~\ref{sec:Conclusion-and-Future},
which concludes the paper.

\section{PRELIMINARIES\label{sec:PRELIMINARIES}}

\subsection{Calculation of ITD}

The only cue used for localization in this paper is the ITD, which
is the time difference of a sound signal traveling to the two microphones
and can be calculated using the cross-correlation technique~\cite{Knapp1976The,azaria1984time}.
\begin{figure}
\hfill{}\includegraphics[width=0.98\columnwidth]{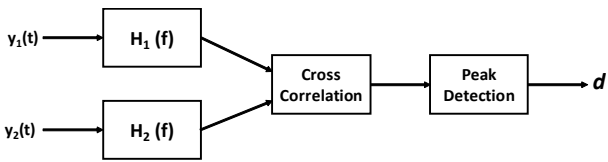}\hfill{}\caption{Interaural Time Delay (ITD) estimation between signals $y_{1}(t)$
and $y_{2}(t)$ using the cross-correlation technique. \label{fig:Delay estimation} }
\end{figure}

Consider a single stationary sound source placed in an environment.
Let $y_{1}(t)$ and $y_{2}(t)$ be the sound signals captured by two
spatially separated microphones in the presence of noise, which are
given by~\cite{Knapp1976The}
\begin{align}
y_{1}(t) & =s(t)+n_{1}(t),\\
y_{2}(t) & =\delta\cdot s\left(t+t_{d}\right)+n_{2}(t),
\end{align}
where $s\left(t\right)$ is the sound signal, $n_{1}(t)$ and $n_{2}(t)$
are real and jointly stationary random processes, \emph{$t_{d}$}
denotes the time difference of $s\left(t\right)$ arriving at the
two microphones, and $\delta$ is the signal attenuation factor due
to different traveling distances of the sound signal to the two microphones.
It is commonly assumed that $\delta$ changes slowly and \emph{$s(t)$}
is uncorrelated with noises $n_{1}(t)$ and $n_{2}(t)$~\cite{Knapp1976The}.
The cross-correlation function of $y_{1}(t)$ and $y_{2}(t)$ is given
by 
\[
R_{y_{1},y_{2}}(\tau)=E\left[y_{1}(t)\cdot y_{2}(t-\tau)\right],
\]
where \emph{$E\left[\cdot\right]$} represents the expectation operator.
Figure~\ref{fig:Delay estimation} shows the process of delay estimation
between $y_{1}(t)$ and $y_{2}(t)$, where $H_{1}(f)$ and $H_{2}(f)$
represent scaling functions or pre-filters~\cite{Knapp1976The}.
Various techniques can be used to eliminate or reduce the effect of
background noise and reverberations~\cite{Boll1979,BollPulsipher1980,naylor2010speech,spriet2007speech,Gala2010,Gala2011}.
An improved version of the cross-correlation method incorporating
$H_{1}(f)$ and $H_{2}(f)$ is called Generalized Cross-Correlation
(GCC)~\cite{Knapp1976The}, which further improves the estimation
of time delay. 

The time difference of $y_{1}(t)$ and $y_{2}(t)$, i.e., the ITD,
is given by $\hat{T}\triangleq\arg\max_{\tau}R_{y_{1},y_{2}}.$ The
distance difference of the sound signal traveling to the two microphones
is given by $d\triangleq\hat{T}\cdot c_{0},$ where $c_{0}$ is the
sound speed and is usually selected to be 345 m/s. 

\subsection{Far-Field Assumption\label{subsec:Far-Field-Assumption}}

The area around a sound source can be divided into five different
fields: free field, near field, far field, direct field and reverberant
field~\cite{ISO12001,Hansen2001}. The region close to a source where
the sound pressure and the acoustic particle velocity are not in phase
is regarded as the near field. The range of the near field is limited
to a distance from the source equal to approximately a wavelength
of sound or equal to three times the largest dimension of the sound
source, whichever is the larger. The far field of a source begins
where the near field ends and extends to infinity. Under the far-field
assumption, the acoustic wavefront reaching the microphones is planar
and not spherical, in the sense that the waves travel in parallel
i.e. the angle of incidence is the same for the two microphones~\cite{Calmes2009}.

\subsection{Observability Analysis}

Consider a nonlinear system described by a state-space model
\begin{align}
\dot{\mathbf{x}} & =f\left(\mathbf{x}\right),\label{eq:process}\\
\mathbf{y} & =h\left(\mathbf{x}\right),\label{eq:output}
\end{align}
where $\mathbf{x}\in\mathbb{R}^{n}$ and $\mathbf{y}\in\mathbb{R}^{m}$
are the state and output vectors, respectively, and $f\left(\cdot\right)$
and $h\left(\cdot\right)$ are the process and output functions, respectively.
The observability matrix of the system described by~(\ref{eq:process})
and~(\ref{eq:output}) is then given by~\cite{hedrick2005control}
\begin{align*}
\Omega & =\left[\begin{array}{ccc}
\left(\frac{\partial L_{f}^{0}h}{\partial\mathbf{x}}\right)^{T} & \left(\frac{\partial L_{f}^{1}h}{\partial\mathbf{x}}\right)^{T} & \cdots\end{array}\right]^{T},
\end{align*}
where the Lie derivatives are given by $L_{f}^{0}h=h\left(\mathbf{x}\right)$
and $L_{f}^{n}h=\frac{\partial L_{f}^{n-1}h}{\partial\mathbf{x}}f$.
The system is observable if the observability matrix $\Omega$ has
rank $n$. 

\section{Mathematical Models and Observability Analysis for Orientation Localization\label{sec:Angular-Localization}}

The complete localization of a sound source is usually achieved in
two stages, the orientation (i.e., azimuth and elevation angles) localization
and distance localization. In this section, the methodology of the
orientation localization is presented.
\begin{figure}[h]
\centering{}\includegraphics[width=0.7\columnwidth]{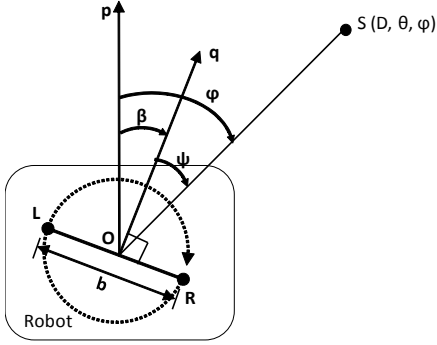}\caption{Top view of the robot illustrating different angle definitions due
to the rotation of the microphone array. \label{figure system formulation}}
\end{figure}
\begin{figure}[h]
\centering{}\includegraphics[width=0.7\columnwidth]{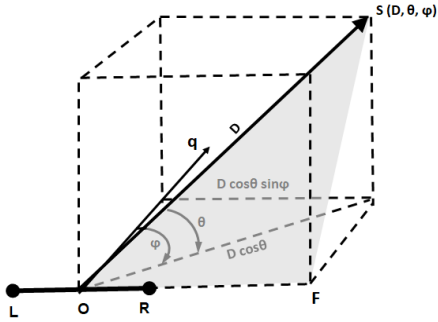}\caption{3D view of the system for orientation localization.\label{fig:Spherical Coordinate System}}
\end{figure}

\subsection{Definitions}

As shown in Figures~\ref{figure system formulation} and \ref{fig:Spherical Coordinate System},
the acoustic signal generated by the sound source \emph{$S$} is collected
by the left and right microphones, $L$ and $R$, respectively. Let
$O$ be the center of the robot as well as the two microphones. The
location of $S$ is represented by ($D,\theta,\varphi$), where $D$
is the distance between the source and the center of the robot, i.e.,
the length of segment $\overline{OS}$, $\theta\in\left[0,\frac{\pi}{2}\right]$
is the elevation angle defined as the angle between $\overline{OS}$
and the horizontal plane, and $\varphi\in\left(-\pi,\pi\right]$ is
the azimuth angle defined as the angle measured clockwise from the
robot heading vector, $\mathbf{p}$, to $\overline{OS}$. Letting
unit vector $\mathbf{q}$ be the orientation (heading) of the microphone
array, $\beta$ be the angle between $\mathbf{p}$ and $\mathbf{q}$,
and $\psi$ be the angle between $\mathbf{q}$ and $\overline{OS}$,
both following a right hand rotation rule, we have
\begin{equation}
\varphi=\psi+\beta.\label{eq:azimuth}
\end{equation}
For a clockwise rotation, we have $\beta\left(t\right)=\omega t$,
where $\omega$ is the rotational speed of the two microphones, and
$\psi\left(t\right)=\varphi-\omega t$. 
\begin{figure}[h]
\centering{}\includegraphics[width=0.7\columnwidth]{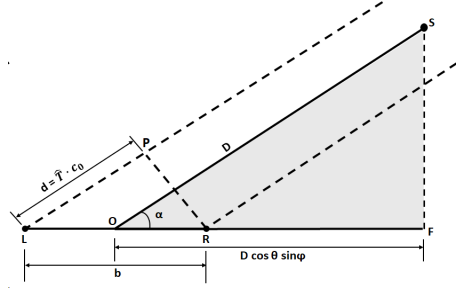}\caption{The shaded triangle in Figure~\ref{fig:Spherical Coordinate System}.\label{fig:Gray Triangle Plane}}
\end{figure}
In the shaded triangle, $\triangle SOF$, shown in Figures\emph{\large{}~}\ref{fig:Spherical Coordinate System}
and \ref{fig:Gray Triangle Plane}, define $\alpha\triangleq\angle SOF$
and we have $\alpha+\psi=\frac{\pi}{2}$ and $\cos\alpha=\cos\theta\sin\psi.$
Based on the far-field assumption in Section~\ref{subsec:Far-Field-Assumption},
we have
\begin{equation}
d=\hat{T}\cdot c_{0}=b\cos\alpha=b\cos\theta\sin\psi.\label{eq:d-1}
\end{equation}
where $b$ is the distance between the two microphones, i.e. the length
of the segment $\overline{LR}$. 

To avoid cone of confusion~\cite{Wallach1939} in SSL, the two-microphone
array is rotated with a nonzero angular velocity~\cite{zhong2016active}.
Without loss of generality, in this paper we assume a clockwise rotation
of the microphone array on the horizontal plane while the robot itself
does not rotate nor translate throughout the entire estimation process,
which implies that $\varphi$ is constant. 

\subsection{2D Localization}

If the sound source and the robot are on the same horizontal plane,
i.e., $\theta=0$, we have $d=b\sin\psi$. Assume that the microphone
array rotates clockwise with a constant angular velocity, $\omega$.
Considering the state-space model for 2D localization with the state
$x_{2D}\triangleq\psi$, and the output as $y_{2D}\triangleq d$,
we have
\begin{align}
\dot{x}_{2D} & =\dot{\psi}=-\omega,\label{eq:2D state xdot}\\
y_{2D} & =b\sin\psi.\label{eq:2D state y}
\end{align}
\begin{theorem}
\label{thm:2D}The system described by Equations~(\ref{eq:2D state xdot})
and (\ref{eq:2D state y}) is observable if 1)~$b\neq0$ and 2)~$\omega\neq0$
or $\psi\neq2k\pi\pm\frac{\pi}{2}$, where $k\in\mathbb{Z}$.
\end{theorem}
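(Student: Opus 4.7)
The plan is to apply the observability criterion reviewed in the Preliminaries directly, since here $n=1$ (the state $x_{2D}=\psi$ is scalar) so that observability reduces to checking that at least one row of the observability matrix $\Omega$ is nonzero. I will therefore compute only as many Lie derivatives as needed to cover all the cases claimed in the theorem.

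First I would compute $L_f^{0}h = b\sin\psi$, giving the top row
\[
\frac{\partial L_f^{0}h}{\partial\psi}=b\cos\psi.
\]
This is nonzero exactly when $b\neq 0$ and $\psi\neq 2k\pi\pm\tfrac{\pi}{2}$, so in that regime observability is already established and we do not need to invoke the rotation rate. Next, because $f(\psi)=-\omega$, I would compute
\[
L_f^{1}h=\frac{\partial L_f^{0}h}{\partial\psi}\,f=-b\omega\cos\psi,\qquad \frac{\partial L_f^{1}h}{\partial\psi}=b\omega\sin\psi,
\]
which is the second row of $\Omega$. At the values $\psi=2k\pi\pm\tfrac{\pi}{2}$ left uncovered by the first row we have $\sin\psi=\pm 1$, so the second row equals $\pm b\omega\neq 0$ precisely when $b\neq 0$ and $\omega\neq 0$.

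I would then assemble the case analysis: assuming $b\neq 0$, either $\psi\neq 2k\pi\pm\tfrac{\pi}{2}$ and the first row gives rank one, or $\psi=2k\pi\pm\tfrac{\pi}{2}$ in which case $\omega\neq 0$ forces the second row to give rank one. Combining the two cases yields rank $\Omega =1=n$ under the stated hypotheses, which proves observability. I would close by noting that if $b=0$ both rows vanish identically (and the output is identically zero), which is consistent with the necessity of condition~(1).

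I do not expect a genuine obstacle: the computation is a one-line Lie derivative, and the only care needed is in articulating the disjunctive condition~(2) as a clean case split so that the two Lie derivatives together cover every admissible $\psi$. If anything, the subtlety is simply being explicit that the rank-one requirement for a scalar state turns the usual full-rank test into the existence of a single nonvanishing row, which is what makes the ``or'' in the hypothesis sufficient rather than requiring both clauses simultaneously.
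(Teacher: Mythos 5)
Your proposal is correct and follows essentially the same route as the paper: both compute the rows $b\cos\psi$ and $b\omega\sin\psi$ of the observability matrix and observe that, for the scalar state, observability reduces to one of these rows being nonzero, split into the cases $\psi\neq2k\pi\pm\frac{\pi}{2}$ and $\omega\neq0$. Your version is merely a more explicit writeup of the paper's case analysis.
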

\begin{proof}
The observability matrix~\cite{hermann1977nonlinear,hedrick2005control}
for the system described by Equations~(\ref{eq:2D state xdot}) and
(\ref{eq:2D state y}) is given by
\begin{equation}
O_{2D}=\left[\begin{array}{cccc}
b\cos\psi & b\omega\sin\psi & -b\omega^{2}\cos\psi & \cdots\end{array}\right]^{T}.
\end{equation}
The system is observable if $O_{2D}$ has rank one, which implies
$b\neq0$. If $\omega=0$, observability requires that $\cos\psi\neq0$,
which implies $\psi\neq2k\pi\pm\frac{\pi}{2}$ . If $\omega\neq0$,
$O_{2D}$ is full rank for all $\psi$. 
\end{proof}

\begin{remark}
\label{rem:Since-the-microphones}Since the two microphones are separated
by a non-zero distance, (i.e., $b\neq0$) and the microphone array
rotates with a non-zero constant angular velocity (i.e., $\omega\neq0$),
the system is observable in the domain of definition. 
\end{remark}

\subsection{3D Localization}

Considering the state-space model for 3D localization with the state
$x_{3D}\triangleq\left[\theta,\psi\right]^{T}$, and the output as
$y_{3D}\triangleq d$, we have
\begin{align}
\dot{x}_{3D} & =\left[\begin{array}{c}
\dot{\theta}\\
\dot{\psi}
\end{array}\right]=\left[\begin{array}{c}
0\\
-\omega
\end{array}\right],\label{eq:3D state x}\\
y_{3D} & =b\cos\theta\sin\psi.\label{eq:3d_output}
\end{align}
\begin{theorem}
\label{thm:3D}The system described by Equations~(\ref{eq:3D state x})
and~(\ref{eq:3d_output}) is observable if 1)~$b\neq0$, 2)~$\omega\neq0$,
3)~$\theta\neq0^{o}$, and 4)~$\theta\neq90^{o}$.
\end{theorem}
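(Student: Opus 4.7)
The plan is to mirror the strategy of Theorem~\ref{thm:2D}, but now with $n=2$: compute the Lie derivatives $L_f^0 h$ and $L_f^1 h$, stack their gradients into a $2\times 2$ observability matrix $O_{3D}$, and show that its determinant factors into a product whose nonvanishing is equivalent to the four listed conditions.

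First I would compute $L_f^0 h = b\cos\theta\sin\psi$ and its gradient $(-b\sin\theta\sin\psi,\; b\cos\theta\cos\psi)$. Then, using $f = (0,-\omega)^T$, I would obtain
\[
L_f^1 h \;=\; \frac{\partial h}{\partial \theta}\cdot 0 + \frac{\partial h}{\partial \psi}\cdot(-\omega) \;=\; -b\omega\cos\theta\cos\psi,
\]
whose gradient is $(b\omega\sin\theta\cos\psi,\; b\omega\cos\theta\sin\psi)$. Stacking these gradients gives
\[
O_{3D}=\begin{pmatrix} -b\sin\theta\sin\psi & b\cos\theta\cos\psi \\ b\omega\sin\theta\cos\psi & b\omega\cos\theta\sin\psi \end{pmatrix}.
\]

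Next I would evaluate $\det O_{3D}$. A direct expansion, followed by factoring out $-b^{2}\omega\sin\theta\cos\theta$ and using $\sin^{2}\psi+\cos^{2}\psi=1$, collapses the determinant to
\[
\det O_{3D} \;=\; -b^{2}\omega\sin\theta\cos\theta \;=\; -\tfrac{1}{2}b^{2}\omega\sin(2\theta).
\]
The system is observable iff $O_{3D}$ has rank $2$, i.e.\ iff this determinant is nonzero. The factor $b^{2}$ forces $b\neq 0$; the factor $\omega$ forces $\omega\neq 0$; and the factor $\sin(2\theta) = 2\sin\theta\cos\theta$ vanishes precisely when $\theta\in\{0^{\circ},90^{\circ}\}$ (within the physical range $\theta\in[0,\pi/2]$ from the definitions), which yields the remaining two conditions. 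Notably, the azimuth-like variable $\psi$ drops out of the determinant entirely, so observability is independent of $\psi$, unlike the 2D case.

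I do not expect a genuine obstacle here; the computation is routine and the factorization is clean. The one delicate point worth narrating in the write-up is the interpretation of the two degenerate elevations: $\theta=90^{\circ}$ collapses the output $y_{3D}=b\cos\theta\sin\psi$ to the identically zero signal regardless of $\psi$, while $\theta=0^{\circ}$ makes $y_{3D}$ independent of $\theta$ so that $\theta$ cannot be distinguished from any nearby value. These two failure modes motivate the detection and compensation scheme developed in Section~\ref{sec:Complete-Solution}, and it is worth flagging that connection at the end of the proof.
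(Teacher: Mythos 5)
Your proposal is correct and follows essentially the same route as the paper: both form the $2\times2$ matrix from the gradients of $L_f^0h$ and $L_f^1h$, compute $\det = -b^{2}\omega\sin\theta\cos\theta$ (the $\psi$-dependence cancelling via $\sin^{2}\psi+\cos^{2}\psi=1$), and read off the four conditions from the factors. The paper additionally remarks that higher-order Lie derivatives add no rank, but since the theorem only asserts sufficiency, your argument via the nonvanishing $2\times2$ determinant is complete as written.
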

\begin{proof}
The observability matrix for~(\ref{eq:3D state x}) and~(\ref{eq:3d_output})
is given by 
\begin{equation}
O_{3D}=\left[\begin{array}{cc}
-b\sin\theta\sin\psi & b\cos\theta\cos\psi\\
b\omega\sin\theta\cos\psi & b\omega\cos\theta\sin\psi\\
b\omega^{2}\sin\theta\sin\psi & -b\omega^{2}\cos\theta\cos\psi\\
-b\omega^{3}\sin\theta\cos\psi & -b\omega^{3}\cos\theta\sin\psi\\
\cdots & \cdots
\end{array}\right].
\end{equation}
It should be noted that higher-order Lie derivatives do not add rank
to $O_{3D}$. Consider the squared matrix consisting of the first
two rows of $O_{3D}$
\[
\Omega_{3D}=\left[\begin{array}{cc}
-b\sin\theta\sin\psi & b\cos\theta\cos\psi\\
b\omega\sin\theta\cos\psi & b\omega\cos\theta\sin\psi
\end{array}\right],
\]
and the determinant of the $\Omega_{3D}$ is 
\[
\det\left\{ \Omega_{3D}\right\} =-b^{2}\omega\sin\theta\cos\theta.
\]
The system is observable if
\begin{equation}
b\neq0,\text{ }\omega\neq0,\text{ }\theta\neq0^{o},\text{ }and\text{ }\theta\neq90^{o}.
\end{equation}
Further investigation can be done by selecting two even (or odd) rows
from $O_{3D}$ to form a squared matrix, whose determinant is always
zero. .
\end{proof}

\begin{remark}
As it is always true that $b\neq0$ and $\omega\neq0$ due to Remark~\ref{rem:Since-the-microphones},
the system is observable only when $\theta\neq0^{o}$ and $\theta\neq90^{o}$.
Experimental results presented by Zhong et al.~\cite{zhong2016active}
using a similar model illustrates large estimation error when $\theta$
is close to zero. 
\end{remark}
To further investigate the system observability, consider the following
two special cases: (1) $\theta$ is known and (2) $\psi$ is known. 

Assume that $\theta$ is known and consider the following system 
\begin{align}
\dot{x}_{\psi} & =\dot{\psi}=-\omega,\label{eq:3D state psi xdot}\\
y_{\psi} & =b\cos\theta\sin\psi.\label{eq:3D state psi y}
\end{align}
\begin{corollary}
\label{cor:AzimuthObservability}The azimuth angle in the system described
by Equations~(\ref{eq:3D state psi xdot}) and~(\ref{eq:3D state psi y})
is observable if 1)~$b\neq0$, 2)~$\omega\neq0$, and 3)~$\theta\neq90^{o}$.
\end{corollary}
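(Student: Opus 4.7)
The plan is to reduce this corollary directly to the structure of Theorem~\ref{thm:2D} by recognizing that, once $\theta$ is treated as a known constant, the factor $b\cos\theta$ in the output acts exactly like an ``effective interaural baseline.'' Concretely, the scalar system in~(\ref{eq:3D state psi xdot})--(\ref{eq:3D state psi y}) has the same form as the 2D model in~(\ref{eq:2D state xdot})--(\ref{eq:2D state y}) with $b$ replaced by $b\cos\theta$, so the conditions of Theorem~\ref{thm:2D} should translate into $b\cos\theta\neq 0$ together with the usual nondegeneracy in $\omega$ or $\psi$.

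I would then carry out the standard nonlinear observability check as in the previous two proofs. Setting $h(\psi)=b\cos\theta\sin\psi$ and $f(\psi)=-\omega$, I compute the first few Lie derivatives,
\begin{align*}
L_f^{0}h &= b\cos\theta\sin\psi, \\
L_f^{1}h &= -b\omega\cos\theta\cos\psi, \\
L_f^{2}h &= -b\omega^{2}\cos\theta\sin\psi,
\end{align*}
and stack their partial derivatives with respect to the scalar state $\psi$ to form the observability column
\[
O_{\psi}=\bigl[\,b\cos\theta\cos\psi,\ b\omega\cos\theta\sin\psi,\ -b\omega^{2}\cos\theta\cos\psi,\ \ldots\,\bigr]^{T}.
\]
Since the state is one-dimensional, observability is equivalent to at least one entry of $O_{\psi}$ being nonzero.

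I would then do the small case analysis. The common factor $b\cos\theta$ forces both $b\neq 0$ and $\theta\neq 90^{\circ}$; otherwise every entry vanishes identically. Under those two conditions, the first entry is nonzero whenever $\cos\psi\neq 0$, and if $\cos\psi=0$ then $\sin\psi=\pm 1$ and the second entry $b\omega\cos\theta\sin\psi$ is nonzero provided $\omega\neq 0$. Hence the three hypotheses of the corollary jointly guarantee rank one, which is the desired conclusion.

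There is essentially no serious obstacle here; the statement is a direct specialization of Theorem~\ref{thm:3D} to a known elevation. The only subtle point worth highlighting is \emph{why} $\theta=0^{\circ}$ no longer appears as an exclusion: in the full 3D model, elevation is unobservable at the horizon because the output is insensitive to $\theta$ there, but once $\theta$ is known there is nothing left to identify in that direction, so only the output-killing case $\theta=90^{\circ}$ (where $\cos\theta=0$ makes $y_{\psi}\equiv 0$) has to be excluded.
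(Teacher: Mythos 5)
Your proof is correct and follows essentially the same route as the paper: both form the Lie-derivative observability column $\bigl[\,b\cos\theta\cos\psi,\ b\omega\cos\theta\sin\psi,\ \ldots\,\bigr]^{T}$ for the scalar state $\psi$ and observe that $b\neq0$, $\theta\neq90^{o}$, and $\omega\neq0$ together guarantee a nonzero entry at every $\psi$. The effective-baseline analogy with Theorem~\ref{thm:2D} and your closing remark on why $\theta=0^{o}$ is no longer excluded are pleasant clarifications but do not alter the argument.
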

\begin{proof}
The observability matrix associated with~(\ref{eq:3D state psi xdot})
and~(\ref{eq:3D state psi y}) is given by 
\begin{equation}
O_{\psi}=\left[\begin{array}{ccc}
b\cos\theta\cos\psi & b\omega\cos\theta\sin\psi & \cdots\end{array}\right]^{T}.
\end{equation}
So, the system is observable if,
\begin{equation}
b\neq0,\;\theta\neq90^{o},\;\text{and}\;\omega\neq0\;\text{or}\;\psi\neq2k\pi\pm\frac{\pi}{2}.
\end{equation}
This shows that $\psi$ is unobservable when $\theta=90^{o}$.
\end{proof}

Assume that $\psi$ is known and consider the following system
\begin{align}
\dot{x}_{\theta} & =\dot{\theta}=0,\label{eq:3D state theta xdot}\\
y_{\theta} & =b\cos\theta\sin\psi.\label{eq:3D state theta y}
\end{align}
\begin{corollary}
\label{cor:ElevationObservability}The elevation angle in the system
described by Equations~(\ref{eq:3D state theta xdot}) and~(\ref{eq:3D state theta y})
is observable if the following conditions are satisfied: 1)~$b\neq0$,
2)~$\omega\neq0$, and 3)~$\theta\neq0^{o}$.
\end{corollary}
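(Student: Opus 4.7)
The plan is to mirror the observability analysis used for Corollary~\ref{cor:AzimuthObservability}, now applied to the one-dimensional state $\theta$ with dynamics $\dot\theta = 0$ and output $h(\theta) = b\cos\theta\sin\psi$, treating $\psi$ as a known time-varying parameter. First I would write down the Lie derivative observability matrix defined in Section~\ref{sec:PRELIMINARIES}. Because the drift $f$ associated with $\theta$ is identically zero, every higher-order Lie derivative $L_f^n h$ with $n\geq 1$ vanishes, so the matrix collapses to a single non-trivial row, namely $\partial h/\partial\theta = -b\sin\theta\sin\psi$, giving $O_\theta = \bigl[-b\sin\theta\sin\psi,\; 0,\; 0,\; \ldots\bigr]^T$.

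Next I would read off the rank condition. Since the state is one-dimensional, rank one is achieved precisely when the single entry $-b\sin\theta\sin\psi$ is non-zero. The three factors yield the three assumptions in the statement: $b\neq 0$ is the microphone-separation condition already justified by Remark~\ref{rem:Since-the-microphones}; $\sin\theta\neq 0$, combined with the convention $\theta\in[0,\pi/2]$, is exactly $\theta\neq 0^{\circ}$; and $\sin\psi\neq 0$ must be guaranteed by the rotation.

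The main obstacle, and the step I would develop most carefully, is the role of $\omega\neq 0$. In Corollary~\ref{cor:AzimuthObservability} this assumption enters directly through the Lie derivative chain, because the $\psi$-dynamics carry the factor $-\omega$ into $L_f^1 h$ and upgrade the rank of $O_\psi$ at any single instant. Here, with $\dot\theta = 0$, the chain is truncated and $\omega$ never appears in $O_\theta$. Instead, $\omega\neq 0$ serves to make $\psi(t) = \varphi - \omega t$ genuinely time-varying, so that even if $\sin\psi(t^\ast) = 0$ at some isolated instant $t^\ast$, the rank-one condition $\sin\psi(t)\neq 0$ is recovered at all but a discrete set of observation times along the trajectory. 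I would close the argument by observing that a stationary array ($\omega = 0$) locked at $\psi\in\{0,\pi\}$ would produce $h\equiv 0$ independently of $\theta$, which makes the necessity of $\omega\neq 0$ transparent and completes the structural parallel with the preceding corollary.
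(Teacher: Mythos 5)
Your proposal is correct and follows essentially the same route as the paper: the same one-row observability matrix $O_{\theta}=\left[\begin{array}{ccc}-b\sin\theta\sin\psi & 0 & \cdots\end{array}\right]^{T}$, the same rank-one condition on its single non-trivial entry, and the same appeal to the time-variation of $\psi$ to dismiss the $\sin\psi=0$ instants. Your explicit discussion of how $\omega\neq0$ enters (only through making $\psi(t)=\varphi-\omega t$ non-stationary, not through the Lie-derivative chain) is a slightly more careful rendering of the paper's terse remark that ``$\psi$ is time-varying, so it won't stay at $k\pi$,'' but it is the same argument.
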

\begin{proof}
The observability matrix asscoiated with~(\ref{eq:3D state theta xdot})
and~(\ref{eq:3D state theta y}) is given by 
\begin{equation}
O_{\theta}=\left[\begin{array}{ccc}
-b\sin\theta\sin\psi & 0 & \cdots\end{array}\right]^{T}.
\end{equation}
So the system is observable if
\begin{equation}
b\neq0,\;\theta\neq0^{o},\;\text{and}\;\psi\neq k\pi.
\end{equation}
As $\psi$ is time-varying, so it won't stay at $k\pi$. It can be
seen that $\theta$ is unobservable when $\theta=0^{o}$.
\end{proof}

\section{Complete Orientation Localization\label{sec:Complete-Solution}}

To handle the unobservable situations, i.e., $\theta=0^{o}$ and $\theta=90^{o}$,
we present a novel algorithm in this section that utilizes both the
2D and 3D localization models to enable the orientation localization
of a sound source residing anywhere in the domain of definition, i.e.,
$\theta\in\left[0,\pi/2\right]$ and $\varphi\in\left(-\pi,\pi\right]$.
\begin{figure}
\centering\includegraphics[clip,width=0.98\columnwidth]{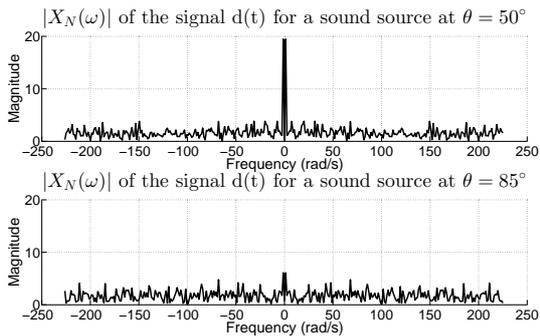}

\caption{The signals after taking the Discrete Fourier transform (DFT) of the
noised signal $d(t)$ with the source located at $\theta=50^{o}$
and $\theta=85^{o}$, respectively. The two big peaks in the upper
figure occur at $\pm2\pi/5$ rad/sec (i.e., the angular velocity of
the rotation of the microphone array) when $\theta=50^{o}$, whereas
small peaks are present in the bottom figure when $\theta=85^{o}$.\label{fig:FFT_Magnitude}}

\end{figure}
\begin{figure}[h]
\centering\includegraphics[width=0.98\columnwidth]{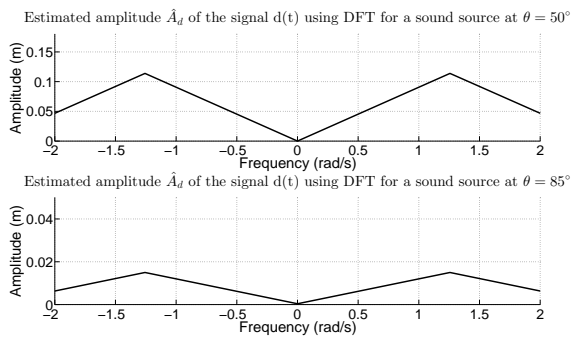}

\caption{Estimated amplitudes $\hat{A}_{d}$ of the signal $d(t)$ using the
DFT for the source located at $\theta=50^{o}$ and $\theta=85^{o}$,
respectively. The maximum of $\hat{A}_{d}$ occurs when the frequency
is at $\pm2\pi/5$ rad/sec. When $\theta=85^{o}$, the maximum of
$\hat{A}_{d}$ is less than 0.017 m. \label{fig:FFT_Amplitude} }
\end{figure}

\subsection{Identification of $\theta=90^{o}$ \label{subsec:CheckIdentification-of-}}

ITD could be zero due to either $90^{o}$ elevation or absence of
sound, the latter of which can be detected by evaluating the power
reception of microphones. In this paper, we focus on the former case. 

Assume that the sensor noise is Gaussian, which dominates the ITD
signal when $\theta$ gets close to $90^{o}$. To check the presence
of the signal $d(t)$ buried in the noise, we can first apply the
Discrete Fourier Transform (DFT) onto the stored $d\left(t\right)$.
The $N$-point DFT of the signal $d(t)$ results in a sequence of
complex numbers in the form of $X_{real}+jX_{imag}$, where $X_{real}$
and $X_{imag}$ represent the real and imaginary coordinates of the
complex number. The magnitude of the complex number is then obtained
by $|X(\omega)|=\sqrt{X_{real}^{2}+X_{imag}^{2}}$. Figure~\ref{fig:FFT_Magnitude}
shows the resulting magnitude ($|X(\omega)|$) signals of $d(t)$
after taking DFT when the sound source is placed at $\theta=50^{o}$
and $85^{o}$, respectively, in simulation. Two big peaks in the top
subfigure (i.e., when $\theta=50^{o}$) are observed when the frequency
is at $\pm2\pi/5$ rad/sec (i.e., the angular velocity of the rotation
of the microphone array). However, the peaks observed in the bottom
subfigure (i.e., when $\theta=85^{o}$) are comparatively very small. 

To eliminate the noise in Figure~\ref{fig:FFT_Magnitude}, define
the estimated amplitude of the ITD signal as $\hat{A}_{d}(\omega)=\frac{2}{N}\cdot|X(\omega)|$.
Figure~\ref{fig:FFT_Amplitude} shows the estimated amplitude ($\hat{A}_{d}$)
of the signal $d(t)$ resulting from Figure~\ref{fig:FFT_Magnitude}.
The bottom subfigure (i.e., when $\theta=85^{o}$) shows that the
maximum value of $\hat{A}_{d}$ is very small compared to the top
subfigure (i.e., when $\theta=50^{o}$). The ITD is considered as
zero if the maximum value of the estimated amplitude $\hat{A}_{d}$
(when the frequency equals the angular velocity of the rotation of
the microphone array) is less than a predefined threshold, $d_{threshold}$.
The selection of $d_{threshold}$ determines the accuracy of the estimation
when the sound source is around $90^{o}$ elevation. The value of
$d_{threshold}$, for example, can be selected as $0.017$ m, which
corresponds to $\theta=85^{o}$ as in Figure~\ref{fig:FFT_Amplitude},
thereby giving an accuracy of $5^{o}$. 

\subsection{Identification of $\theta=0^{o}$ \label{subsec:Theta Zero Deg Identification}}

Theorem~\ref{thm:2D} guarantees accurate azimuth angle estimation
using the 2D model when the sound source is located with zero elevation.
We observed that when the elevation of the sound source is not close
to zero, the estimation of the azimuth angle provided by the 2D model
is far off the real value. 
\begin{figure}[h]
\centering\includegraphics[width=0.98\columnwidth]{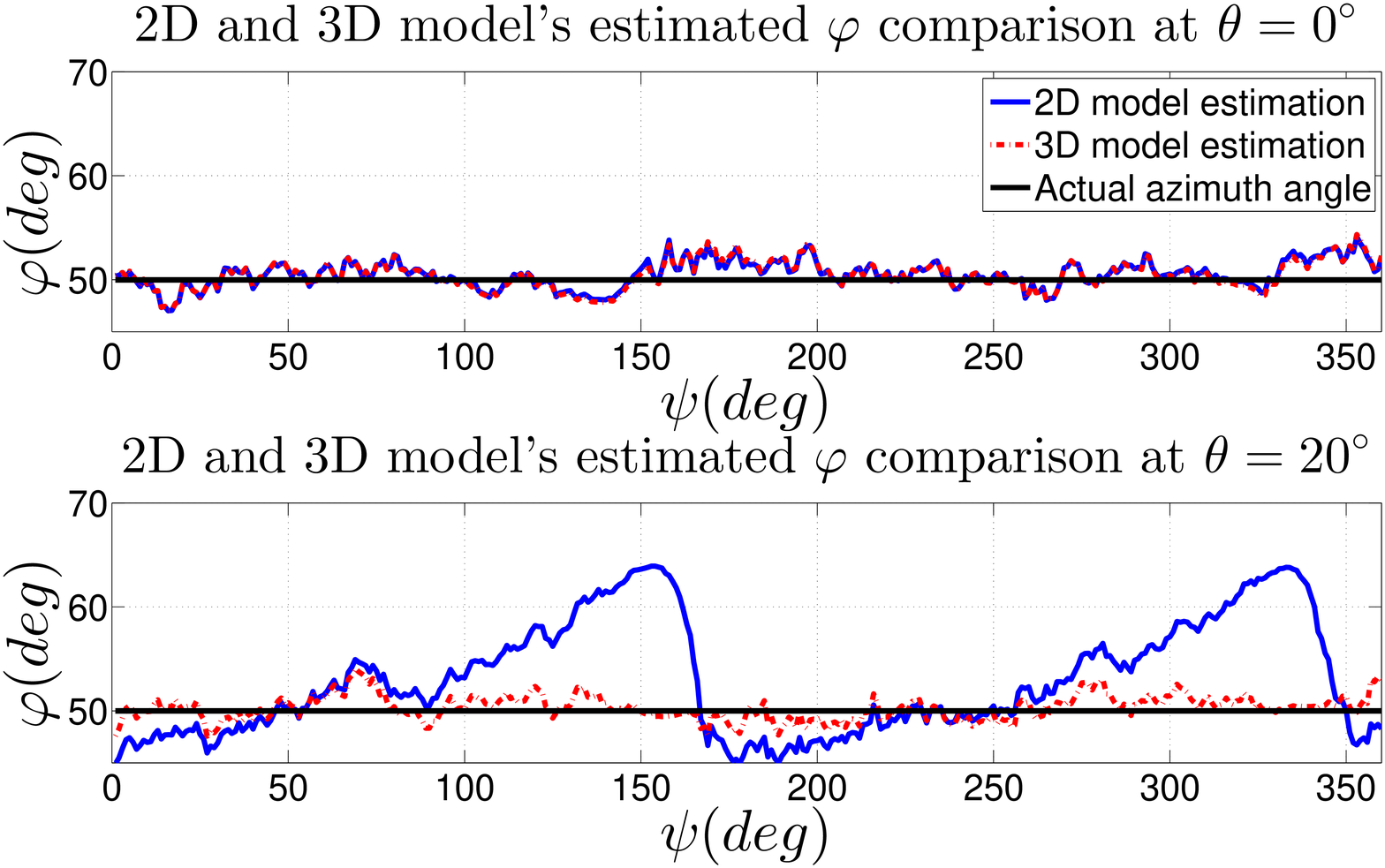}\caption{Comparison of azimuth angle estimations using the 2D and 3D localization
models when a sound source is located at $\theta=0^{o}$ and $\theta=20^{o}$,
respectively. \label{fig:AzimuthComparisonSubplot}}
\end{figure}

On the other hand, Theorem~\ref{thm:3D} guarantees that the azimuth
angle estimation using the 3D model is accurate for all elevation
angles except for $\theta=90^{o}$, which is detected by the approach
in Section \ref{subsec:CheckIdentification-of-}. Therefore, the estimations
resulting from both the 2D model 3D models will be identical if the
sound source is located at $\theta=0^{o}$, as shown in Figure\emph{\large{}~}\ref{fig:AzimuthComparisonSubplot}.
The root-mean-square error (RMSE) is used as a measure of the difference
between the two azimuth estimations as it includes both mean absolute
error (MAE) as well as additional information related to the variance~\cite{brassington2017mean}.
This error is dependent on the value of elevation angle and it increases
as the elevation angle increases, as shown in Figure~\ref{fig:RMSE Surf}.

\begin{figure}[h]
\centering\includegraphics[width=0.98\columnwidth]{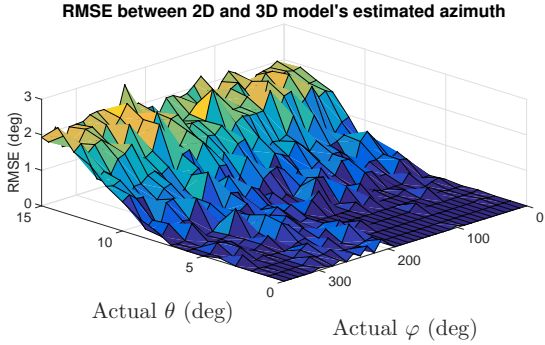}\caption{RMSE between 2D and 3D localization model's estimated azimuth angles.\label{fig:RMSE Surf}}
\end{figure}
\begin{figure}[h]
\centering\includegraphics[width=0.98\columnwidth]{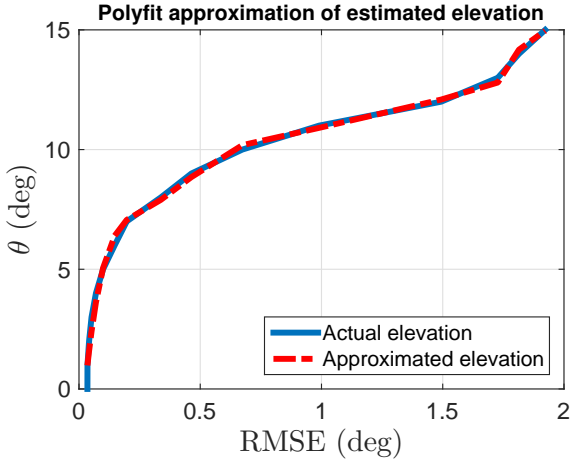}\caption{Approximation of the elevation angle from the RMSE data using the
least square fitted polynomial. \label{fig:Polyfit}}
\end{figure}
In order to get an accurate estimate of the elevation angle close
to zero, a polynomial curve fitting approach is used to map (in a
least-square sense) the RMSE values to the elevation angles. Different
RMSE values are collected beforehand in the environment where the
localization would be done. The RMSE values associated with the same
elevation angle but different azimuth angles express small variations,
as seen in Figure\emph{\large{}~}\ref{fig:RMSE Surf}. Therefore,
for a particular elevation angle, the mean of all RMSE values with
different azimuth angles will be selected as the RMSE value corresponding
to the elevation angle. An example curve is shown in Figure~\ref{fig:Polyfit}. 

\subsection{Complete Orientation Localization Algorithm}

\begin{figure}[h]
\centering{}\includegraphics[width=0.98\columnwidth]{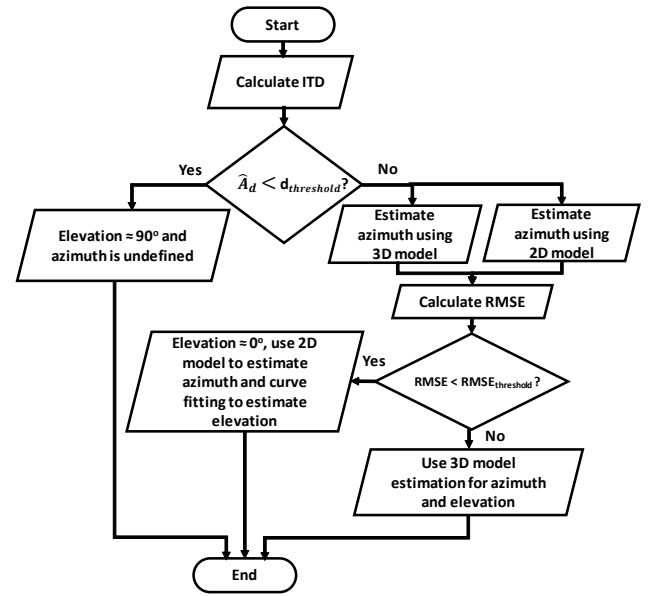}\caption{Flowchart addressing the non-observability problems reflected in the
3D localization model. \label{fig:Flowchart combined model}}
\end{figure}
\begin{algorithm}
1: Calculate the ITD, $\hat{T}$, from the recorded signals of two
microphones.

2: \textbf{IF} $\hat{A}_{d}<d_{threshold}$ \textbf{THEN}

3: $\text{ \text{ \text{ \text{ }}}}$The elevation angle of the sound
source is $\theta=90^{o}$ and the azimuth angle, $\varphi$, is undefined.

4: \textbf{ELSE}

5: $\text{ \text{ \text{ \text{ }}}}$Estimate the azimuth $\varphi_{2D}$
and $\varphi_{3D}$ using 2D and 3D localization models, respectively

6: $\text{ \text{ \text{ \text{ }}}}$Calculate the $RMSE$ between
$\varphi_{2D}$ and $\varphi_{3D}$

7: $\text{ \text{ \text{ \text{ }}}}$\textbf{IF }$RMSE<RMSE_{threshold}$
\textbf{THEN}

8: $\text{ \text{ \text{ \text{ }}}}$$\text{ \text{ \text{ \text{ }}}}$Use
polynomial curve fitting to determine $\theta$ using the calculated
$RMSE$ value and estimate $\varphi$ using either 2D or 3D localization
model

9: $\text{ \text{ \text{ \text{ }}}}$\textbf{ELSE }estimate both
$\theta$ and $\varphi$ using the 3D localization model

10:$\text{ \text{ \text{ \text{ }}}}$\textbf{END IF}

11: \textbf{END IF}

\caption{Complete 3D orientation localization\label{alg:Algorithm combined model}}

\end{algorithm}
Figure\emph{\large{}~}\ref{fig:Flowchart combined model} illustrates
the flowchart of the proposed algorithm for the complete orientation
localization. The pseudo code of the proposed complete orientation
localization is given in Algorithm~\ref{alg:Algorithm combined model}.
The $RMSE_{threshold}$ is the value used to check when the elevation
angle is close to $0^{o}$. This threshold value decides the point
until which the curve fitting is required, ansd after which the 3D
model can be trusted for elevation estimation. 

\section{Distance Localization\label{sec:Distance-Localization}}

\begin{figure}[h]
\begin{centering}
\includegraphics[width=0.7\columnwidth]{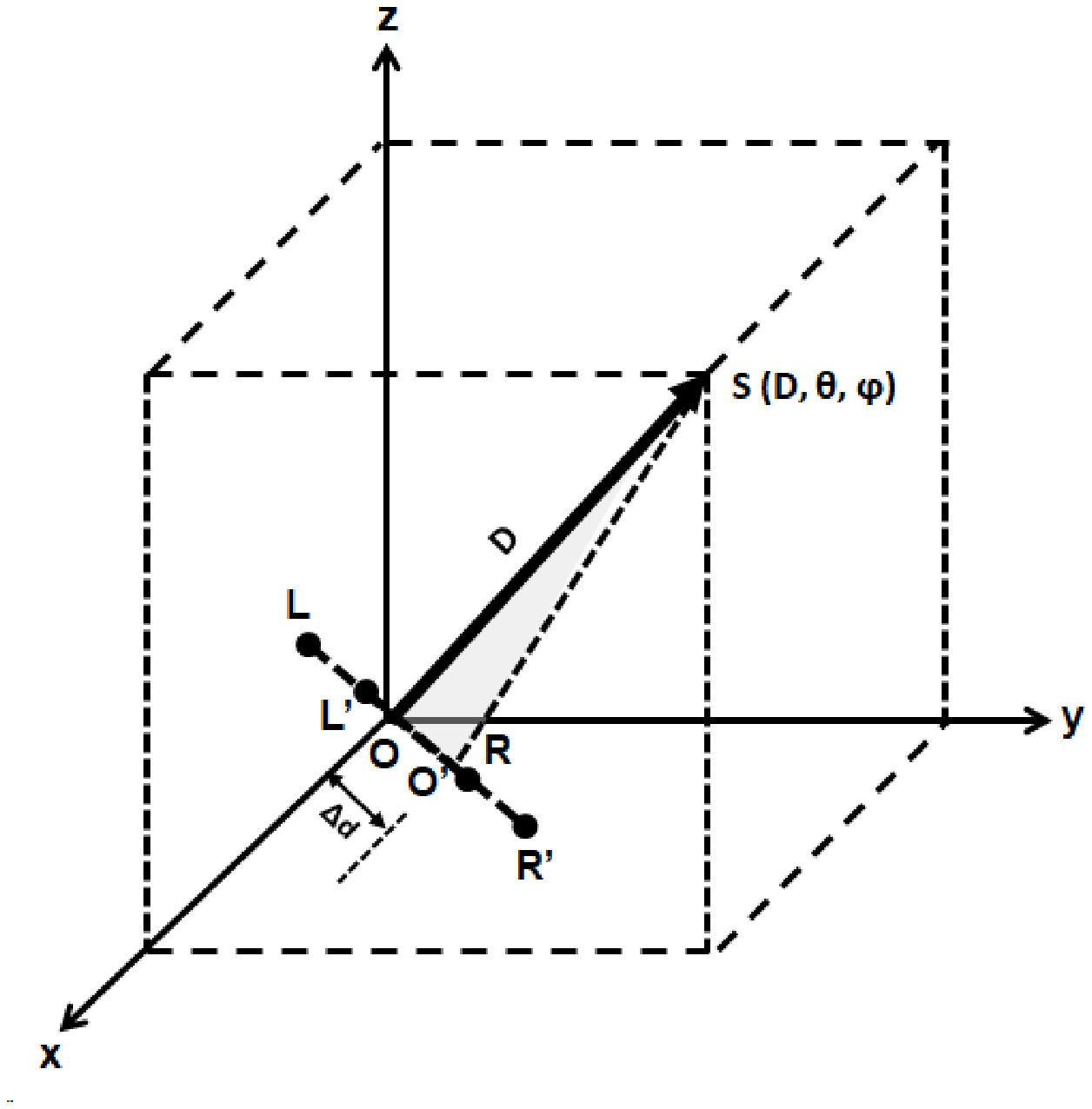}
\par\end{centering}
\caption{3D view of the system for distance localization.\label{fig:Distance Estimation}}
\end{figure}
\begin{figure}[h]
\begin{centering}
\includegraphics[width=0.5\columnwidth]{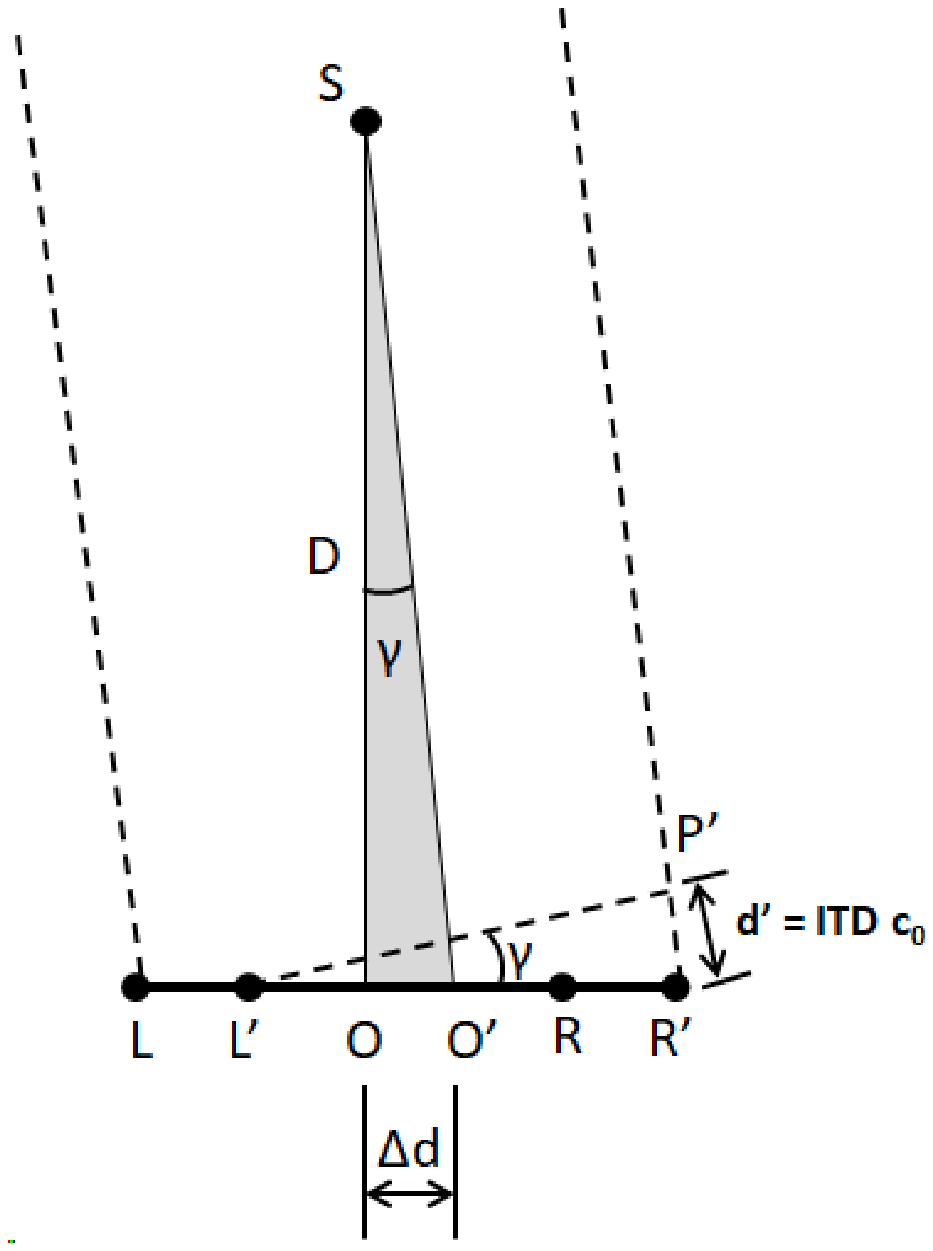}
\par\end{centering}
\caption{Gray triangle in Figure~\ref{fig:Distance Estimation}.\label{fig:Gray Triangle Distance}}
\end{figure}

The novel distance localization approach presented in this section
depends on an accurate orientation localization. Assume that the angular
location of the sound source has been obtained by using Algorithm~\ref{alg:Algorithm combined model}
and the microphone array has been regulated facing toward the sound
source, as shown in Figure~\ref{fig:Distance Estimation}. The proposed
distance localization approach requires the microphone array, $\overline{LR}$,
to translate with a distance $\Delta d$ along the line perpendicular
to the center-source vector (on the horizontal plane). This translation
shifts the center of the microphone array, $O$, to a new point, \emph{$O'$,
}and $\gamma$ is defined as the angle between vectors \emph{$\overline{O'S}$}
and $\overline{OS}$, as shown in Figure~\ref{fig:Gray Triangle Distance}.
Note that the center of the robot, \emph{O, }is unchanged.\emph{ }The
objective is to estimate distance \emph{D }between the center of the
robot \emph{O} and the source \emph{S.} 

\subsection{Mathematical Model for Distance Localization}

Consider the gray triangle shown in Figure~\ref{fig:Gray Triangle Distance}.
Based on the far-field assumption in Section~\ref{subsec:Far-Field-Assumption},
the length $\overline{R'P'}$ is given by
\begin{equation}
d'=b\sin\gamma.\label{eq:ITD distance}
\end{equation}
 In triangle $\triangle SOO'$, we have
\begin{equation}
\sin\gamma=\frac{\triangle d}{\sqrt{(\triangle d)^{2}+D^{2}}}.\label{eq:sin beta}
\end{equation}
Defining the state as $x_{dist}=D$ and output as $y_{dist}$, the
state-space model is given by
\begin{align}
\dot{x}_{dist} & =0,\label{eq:Distance state x}\\
y_{dist} & =\frac{b\text{ }\triangle d}{\sqrt{(\triangle d)^{2}+D^{2}}}.\label{eq:Distance state y}
\end{align}
\begin{theorem}
The system described by Equations~(\ref{eq:Distance state x})~and~(\ref{eq:Distance state y})
is observable if the following conditions are satisfied: 1)~$b\neq0$,
2)~$\triangle d\neq0$, and 3)~$D\neq0$.
\end{theorem}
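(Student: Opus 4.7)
My plan is to apply the nonlinear observability criterion recalled in Section~\ref{sec:PRELIMINARIES}, specialized to the scalar state $x_{dist}=D\in\mathbb{R}$. Because the state space is one-dimensional, the observability matrix only needs to have rank one; this will collapse the test to the single requirement that $\partial h/\partial D$ be nonzero at the operating point.

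The first step will be to observe that, since $f\equiv 0$ in~(\ref{eq:Distance state x}), every higher-order Lie derivative $L_f^{n}h$ with $n\geq 1$ vanishes identically. Consequently, the full observability matrix reduces to the single row $\partial L_f^{0}h/\partial D = \partial h/\partial D$, and examining this one entry is sufficient. Next, I would differentiate
\[
h(D) \;=\; \frac{b\,\triangle d}{\sqrt{(\triangle d)^{2}+D^{2}}}
\]
with respect to $D$ to obtain
\[
\frac{\partial h}{\partial D} \;=\; -\,\frac{b\,\triangle d\,D}{\bigl((\triangle d)^{2}+D^{2}\bigr)^{3/2}}.
\]

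The final step is to read off the observability conditions: since the denominator is strictly positive whenever at least one of $\triangle d$ or $D$ is nonzero, nonvanishing of $\partial h/\partial D$ is equivalent to nonvanishing of the numerator, i.e., $b\neq 0$, $\triangle d\neq 0$, and $D\neq 0$. Under these three hypotheses the observability matrix has rank one, matching the state dimension, so the system is observable.

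There is no serious obstacle here. The only point that genuinely requires care is recognizing up front that $f\equiv 0$ renders every higher Lie derivative trivial, so the infinite observability array collapses to a single scalar partial derivative; once that is in hand, the conclusion is read off by inspection of a rational expression and matches exactly the three stated hypotheses.
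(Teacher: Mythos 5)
Your proposal is correct and follows essentially the same route as the paper: form the nonlinear observability matrix for the scalar state $D$ and require its leading entry to be nonzero, which holds exactly when $b\neq0$, $\triangle d\neq0$, and $D\neq0$. Your version is in fact slightly tighter than the paper's, since you explicitly note that $f\equiv0$ annihilates all higher-order Lie derivatives (the paper leaves this as ``$\cdots$''), and your expression $-\,b\,\triangle d\,D\,\bigl((\triangle d)^{2}+D^{2}\bigr)^{-3/2}$ is the correct derivative $\partial h/\partial D$, whereas the entry printed in the paper appears mis-typeset but vanishes under the same conditions.
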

\begin{proof}
The observability matrix associated with (\ref{eq:Distance state x})
and~(\ref{eq:Distance state y}) is given by 
\begin{equation}
O_{dist}=\left[\begin{array}{cc}
\frac{-2\text{ }b^{2}\text{ }(\triangle d)^{2}\text{ }D}{(\triangle d)^{2}+D^{2}} & \cdots\end{array}\right].
\end{equation}
So the system is observable if
\begin{equation}
b\neq0,\text{ }\triangle d\neq0,\text{ }and\text{ }D\neq0.
\end{equation}
\end{proof}

\begin{remark}
As the microphones are separated by a non-zero distance, i.e., $b\neq0,$
and the microphone array is being translated by a non-zero distance,
i.e., $\triangle d\neq0$, the system is always observable unless
the sound source and the robot are at same location making $D=0$,
which is not in the scope of discussion of this paper.
\end{remark}

\section{Extended Kalman Filter\label{sec:EKF}}

\begin{algorithm}[h]
1: Initialize: $\hat{x}$

2: At each value of sample rate $T_{out}$,

3: \textbf{FOR} i = 1 to N \textbf{DO} 

$\text{ \text{ \text{ \text{ }}}}$ $\text{ \text{ \text{ \text{ }}}}$
\textbf{Prediction}

4: $\text{ \text{ \text{ \text{ }}}}$ $\hat{x}=\hat{x}+(\frac{T_{out}}{N})\text{ }f(\hat{x},u)$

5: $\text{ \text{ \text{ \text{ }}}}$ $A_{J}=\frac{\partial f}{\partial x}(\hat{x},u)$

6: $\text{ \text{ \text{ \text{ }}}}$ $P=P+(\frac{T_{out}}{N})(A_{J}P+PA_{J}^{T}+Q)$

$\text{ \text{ \text{ \text{ }}}}$ $\text{ \text{ \text{ \text{ }}}}$\textbf{Update}

7: $\text{ \text{ \text{ \text{ }}}}$ $C_{J}=\frac{\partial h}{\partial x}(\hat{x},u)$

8: $\text{ \text{ \text{ \text{ }}}}$ $K=PC_{J}^{T}(R+C_{J}PC_{J}^{T})^{-1}$

9: $\text{ \text{ \text{ }}}$ $\text{ }$$P=(I-KC_{J})P$

10:$\text{ \text{ \text{ \text{ }}}}$ $\hat{x}=\hat{x}+K(y[n]-h(\hat{x},u[n])$

11:\textbf{ END FOR}

\caption{Pseudo code for EKF~\cite{BeardMcLain2012} \label{alg:EKF pseudo code}}
\end{algorithm}
\begin{table*}[h]
\caption{EKF parameters.\label{tab:EKFparameters}}

\hfill{}%
\begin{tabular}{|c|c|c|}
\hline 
\multirow{2}{*}{Parameter} & Angular & Distance\tabularnewline
 & localization & localization\tabularnewline
\hline 
\hline 
Process noise variance ($\sigma_{vi},$$i=1,2$) & $0.01$ & $0.1$\tabularnewline
\hline 
Sensor noise variance ($\sigma_{w}$) & $0.01$ & $0.001$\tabularnewline
\hline 
Initial azimuth angle estimate ($\varphi_{initial}$) & $5^{o}$ & \textbf{\textendash{}}\tabularnewline
\hline 
Initial elevation angle estimate ($\theta_{initial}$) & $5^{o}$ & \textbf{\textendash{}}\tabularnewline
\hline 
Initial distance estimate ($D_{initial}$) & \textbf{\textendash{}} & $1$ m\tabularnewline
\hline 
\end{tabular}\hfill{}
\end{table*}
\begin{figure}[h]
\centering{}\includegraphics[width=0.98\columnwidth]{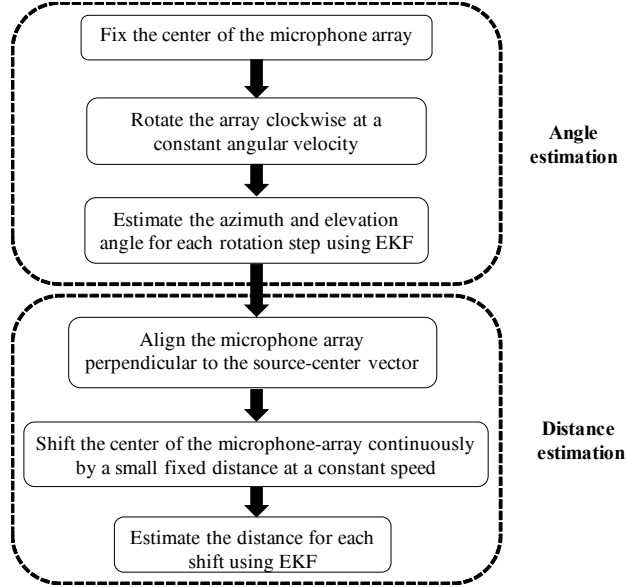}\caption{Block diagram showing the process for the proposed complete angular
and distance localization of a sound source using successive rotational
and translational motions of a set of two microphones. \label{fig:Localization algorithm}}
\end{figure}

The estimation for the angles and distance of the sound source is
conducted by extended Kalman filters. Detailed mathematical derivation
of the EKF can be found in~\cite{BeardMcLain2012}. Algorithm\emph{\large{}~}\ref{alg:EKF pseudo code}
summaries the EKF procedure used in this paper for SSL. The sensor
covariance matrix ($R$) is defined as $\sigma_{w}^{2}$, and the
process covariance matrix ($Q$) is defined as $\sigma_{v}^{2}$ for
the distance localization, $\sigma_{v1}^{2}$ for the 2D orientation
localization and $\text{diag}\{\sigma_{v1}^{2},\sigma_{v2}^{2}\}$
for the 3D orientation localization, respectively, where $\sigma_{vi}$
is the process noise variance corresponding to the $i^{th}$ state
and $\sigma_{w}$ is the sensor noise variance. Key parameters are
listed in Table~\ref{tab:EKFparameters}. The complete EKF-based
SSL procedure is illustrated in Figure~\ref{fig:Localization algorithm}. 

\section{Simulation Results\label{sec:Simulation-Results}}

In this section, we present the simulation results of the proposed
localization technique for both angle and distance localization of
a sound source. 

\subsection{Simulation Environment}

The Audio Array Toolbox~\cite{Donohue2009} is used to simulate a
rectangular space using the image method described in~\cite{allen1979image}.
The robot was placed in the center (origin) of the room. The two microphones
were separated by a distance of $0.18\;\text{m}$ from each other
which is equal to the approximate distance between human ears. The
sound source and the microphones are assumed omnidirectional and the
attenuation of the sound is calculated as per the specifications in
Table\emph{\large{}~}\ref{tab:Room specifications-1}.
\begin{table}[h]
\caption{Simulated room specifications \label{tab:Room specifications-1}}

\hfill{}%
\begin{tabular}{|c|c|}
\hline 
Parameter & Value\tabularnewline
\hline 
\hline 
Dimension & 20m x 20m x 20m\tabularnewline
\hline 
Reflection coefficient of each wall & 0.5\tabularnewline
\hline 
Reflection coefficient of the floor & 0.5\tabularnewline
\hline 
Reflection coefficient of the ceiling & 0.5\tabularnewline
\hline 
Velocity of the sound & 345 m/s\tabularnewline
\hline 
Temperature & 22$^{o}C$\tabularnewline
\hline 
Static pressure & 29.92 mmHg\tabularnewline
\hline 
Relative humidity & 38 \%\tabularnewline
\hline 
\end{tabular}\hfill{}
\end{table}

\subsection{Validation of Observablity}

As discussed earlier, Theorem~\ref{thm:2D} shows that the 2D model
is always observable, however, it does not provides any elevation
information of the sound source. On the other hand, Theorem~\ref{thm:3D}
shows that the 3D model is unobservable when the elevation angle of
the sound source is $0^{o}$ or $90^{o}.$ In order to validate the
observability analysis, localization was performed in the simulated
environment. 

For a sound source located on a 2D plane, Figure\emph{\large{}~}\ref{fig:Error2D}
shows the average of absolute estimation errors versus different azimuth
angles with the sound source at distance of $5\;\text{m}$ and $10\;\text{m}$
to the robot, respectively. It can be seen that all errors are smaller
than $1.8^{o}$ and the mean of the average of absolute errors is
approximately $1^{o}$ for the two cases. 
\begin{figure}
\begin{centering}
\includegraphics[width=0.98\columnwidth]{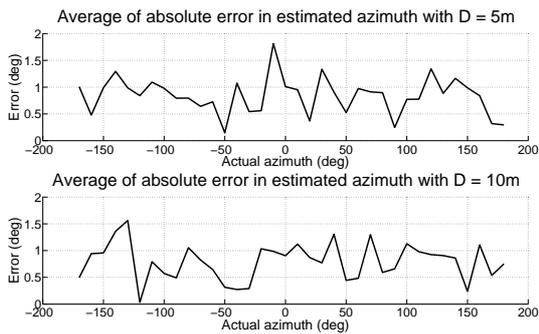}
\par\end{centering}
\caption{Average of absolute errors in azimuth angle estimation using 2D model
with a sound source placed at different azimuth locations at a constant
distance of 5 m and 10 m from the center of the robot.\label{fig:Error2D}}
\end{figure}
\begin{figure}
\hfill{}\includegraphics[width=0.98\columnwidth]{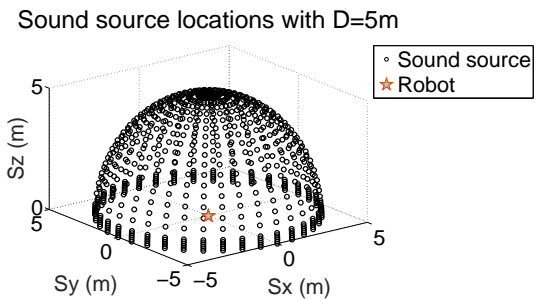}\hfill{}\caption{Sound source locations with a fixed distance of 5 m to the center
of the robot in the simulated room.\label{fig:SoundSourceLocations}}
\end{figure}
\begin{figure}
\centering\includegraphics[width=0.98\columnwidth]{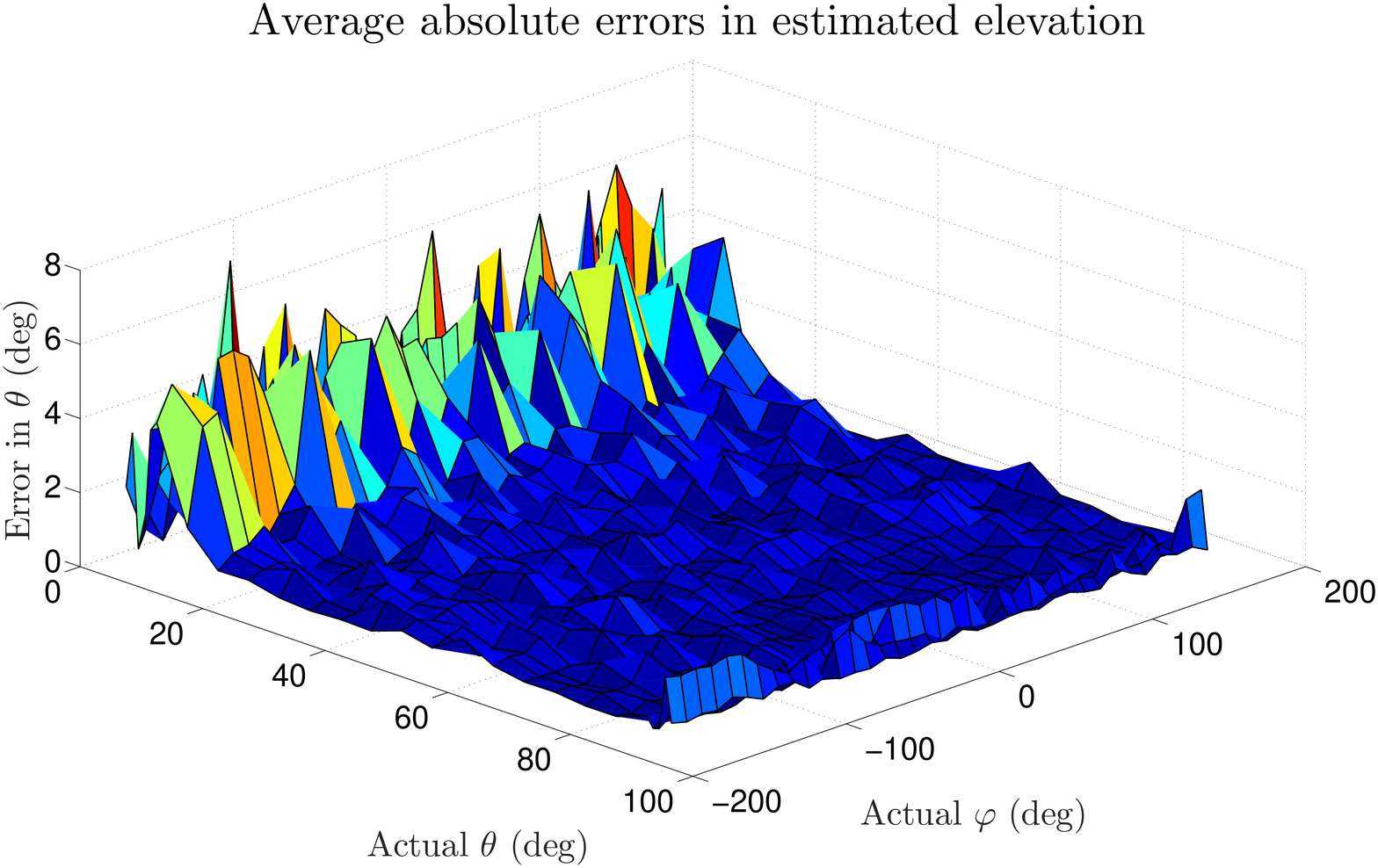}\caption{Average of absolute errors in elevation estimation using the 3D localization
model. Relatively large errors illustrate the non-observability condition
in elevation angle estimation with sound source placed around $0^{o}$
elevation, as described by Theorem~\ref{thm:3D} .\label{fig:ElevationError} }
\end{figure}
\begin{figure}
\centering\includegraphics[width=0.98\columnwidth]{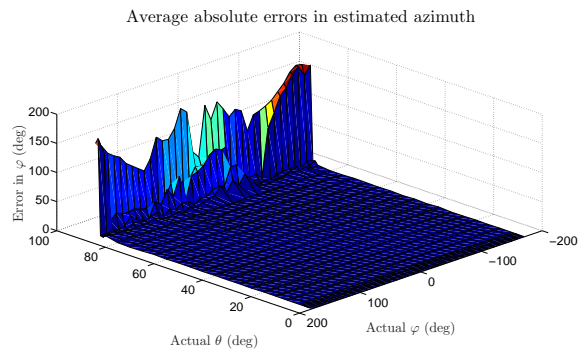}\caption{Average of absolute errors in azimuth estimation using the 3D localization
model. Relatively large errors illustrate the non-observability condition
in azimuth angle estimation with sound source placed around $90^{o}$
elevation, as described by Theorem~\ref{thm:3D} .\label{fig:AzimuthError}}
\end{figure}

To verify the observability conditions for the 3D model as described
by Equations~(\ref{eq:3D state x}) and~(\ref{eq:3d_output}), the
sound source is placed at different locations with a distance of $5\;\text{m}$
from the robot in the simulated room, which evenly cover the hemisphere
above the ground, as shown in Figure\emph{\large{}~}\ref{fig:SoundSourceLocations}.
Figure\emph{\large{}~}\ref{fig:ElevationError} shows the averaged
absolute errors in the elevation estimation versus actual azimuth
and elevation angles of the sound source. Larger errors were observed
when the elevation was close to $0^{o}$, which coincides with Theorem~\ref{thm:3D}.
Figure\emph{\large{}~}\ref{fig:AzimuthError} shows the averaged
absolute errors in the azimuth angle estimation for a single sound
source at different positions. Larger errors were observed when the
elevation was close to $90^{o}$, which again echoes Theorem~\ref{thm:3D}. 

\subsection{Simulation Results for Orientation Localization}

\begin{table*}
\caption{Simulation results of orientation localization for speech \label{tab:Experimental Results Speech}}

\hfill{}%
\begin{tabular}{|c|c|c|c|c|c|c|c|}
\hline 
Expt. & Act. & Act.  & Est. & Avg of abs  & Act. & Est. & Avg of abs \tabularnewline
No. & D(m) & $\varphi$($^{o}$) & $\varphi$($^{o}$) & error ($^{o}$) & $\theta$($^{o}$) & $\theta$($^{o}$) & error ($^{o}$)\tabularnewline
\hline 
\hline 
1 a & 5 & 0 & 0.60 & 0.60 & \multirow{8}{*}{20} & 20.39 & 0.39\tabularnewline
\cline{1-5} \cline{7-8} 
1 b & 5 & 50 & 51.03 & 1.03 &  & 21.44 & 1.44\tabularnewline
\cline{1-5} \cline{7-8} 
1 c & 7 & 90 & 91.21 & 0.21 &  & 20.83 & 0.83\tabularnewline
\cline{1-5} \cline{7-8} 
1 d & 7 & 120 & 121.57 & 1.57 &  & 20.96 & 0.96\tabularnewline
\cline{1-5} \cline{7-8} 
1 e & 3 & 180 & 181.03 & 1.03 &  & 20.16 & 0.16\tabularnewline
\cline{1-5} \cline{7-8} 
1 f & 3 & -40 & -39.33 & 0.67 &  & 19.10 & 0.90\tabularnewline
\cline{1-5} \cline{7-8} 
1 g & 10 & -90 & -88.85 & 1.15 &  & 21.66 & 1.66\tabularnewline
\cline{1-5} \cline{7-8} 
1 h & 10 & -140 & -139.52 & 0.48 &  & 21.18 & 1.18\tabularnewline
\hline 
2 a & 5 & 0 & 2.31 & 2.31 & \multirow{8}{*}{60} & 60.68 & 0.68\tabularnewline
\cline{1-5} \cline{7-8} 
2 b & 5 & 50 & 50.65 & 0.65 &  & 60.53 & 0.53\tabularnewline
\cline{1-5} \cline{7-8} 
2 c & 7 & 90 & 91.79 & 1.79 &  & 60.70 & 0.70\tabularnewline
\cline{1-5} \cline{7-8} 
2 d & 7 & 120 & 121.85 & 1.85 &  & 60.84 & 0.84\tabularnewline
\cline{1-5} \cline{7-8} 
2 e & 3 & 180 & 181.66 & 1.66 &  & 60.05 & 0.05\tabularnewline
\cline{1-5} \cline{7-8} 
2 f & 3 & -40 & -38.66 & 1.34 &  & 60.38 & 0.38\tabularnewline
\cline{1-5} \cline{7-8} 
2 g & 10 & -90 & -89.38 & 0.62 &  & 59.62 & 0.38\tabularnewline
\cline{1-5} \cline{7-8} 
2 h & 10 & -140 & -138.20 & 1.80 &  & 59.78 & 0.22\tabularnewline
\hline 
3 a & 5 & 50 & 50.69 & 0.31 & 0 & 3.39 & 3.39\tabularnewline
\hline 
3 b & 7 & -120 & -119.00 & 1.00 & 4 & 2.40 & 1.60\tabularnewline
\hline 
4 a & 5 & -40 & not def. & not def. & 86 & 90.00 & 4.00\tabularnewline
\hline 
4 b & 7 & 150 & not def. & not def. & 89 & 90.00 & 1.00\tabularnewline
\hline 
\end{tabular}\hfill{}
\end{table*}
\begin{table*}
\caption{Simulation results of orientation localization for white noise \label{tab:Experimental Results Noise}}

\hfill{}%
\begin{tabular}{|c|c|c|c|c|c|c|c|}
\hline 
Expt. & Act. & Act.  & Est. & Avg of abs  & Act. & Est. & Avg of abs \tabularnewline
No. & D(m) & $\varphi$($^{o}$) & $\varphi$($^{o}$) & error ($^{o}$) & $\theta$($^{o}$) & $\theta$($^{o}$) & error ($^{o}$)\tabularnewline
\hline 
\hline 
1 a & 5 & 0 & 1.18 & 1.18 & \multirow{8}{*}{20} & 19.66 & 0.34\tabularnewline
\cline{1-5} \cline{7-8} 
1 b & 5 & 50 & 51.03 & 1.03 &  & 20.44 & 0.44\tabularnewline
\cline{1-5} \cline{7-8} 
1 c & 7 & 90 & 90.25 & 0.25 &  & 20.11 & 0.11\tabularnewline
\cline{1-5} \cline{7-8} 
1 d & 7 & 120 & 121.35 & 1.35 &  & 19.70 & 0.30\tabularnewline
\cline{1-5} \cline{7-8} 
1 e & 3 & 180 & 180.41 & 0.41 &  & 20.48 & 0.48\tabularnewline
\cline{1-5} \cline{7-8} 
1 f & 3 & -40 & -39.44 & 0.56 &  & 19.75 & 0.25\tabularnewline
\cline{1-5} \cline{7-8} 
1 g & 10 & -90 & -89.11 & 0.89 &  & 19.71 & 0.29\tabularnewline
\cline{1-5} \cline{7-8} 
1 h & 10 & -140 & -139.67 & 0.33 &  & 21.18 & 1.18\tabularnewline
\hline 
2 a & 5 & 0 & 1.31 & 1.31 & \multirow{8}{*}{60} & 60.38 & 0.38\tabularnewline
\cline{1-5} \cline{7-8} 
2 b & 5 & 50 & 51.59 & 1.59 &  & 60.39 & 0.39\tabularnewline
\cline{1-5} \cline{7-8} 
2 c & 7 & 90 & 90.74 & 0.74 &  & 60.87 & 0.87\tabularnewline
\cline{1-5} \cline{7-8} 
2 d & 7 & 120 & 121.21 & 1.21 &  & 60.39 & 0.39\tabularnewline
\cline{1-5} \cline{7-8} 
2 e & 3 & 180 & 181.16 & 1.16 &  & 60.51 & 0.51\tabularnewline
\cline{1-5} \cline{7-8} 
2 f & 3 & -40 & -38.66 & 1.34 &  & 60.41 & 0.41\tabularnewline
\cline{1-5} \cline{7-8} 
2 g & 10 & -90 & -88.90 & 1.10 &  & 60.70 & 0.70\tabularnewline
\cline{1-5} \cline{7-8} 
2 h & 10 & -140 & -138.64 & 1.36 &  & 60.57 & 0.57\tabularnewline
\hline 
3 a & 5 & 50 & 51.45 & 1.45 & 0 & 1.57 & 1.57\tabularnewline
\hline 
3 b & 7 & -120 & -118.36 & 1.64 & 4 & 1.57 & 2.43\tabularnewline
\hline 
4 a & 5 & -40 & not def. & not def. & 86 & 90.00 & 4.00\tabularnewline
\hline 
4 b & 7 & 150 & not def. & not def. & 89 & 90.00 & 1.00\tabularnewline
\hline 
\end{tabular}\hfill{}
\end{table*}
A number of experiments were performed to validate the performance
of the proposed SSL technique for orientation localization, as described
in Algorithm~\ref{alg:Algorithm combined model}. White noise and
speech signals were used as a sound source which was placed individually
at different locations in the simulated room with specifications summarized
in Table~\ref{tab:Room specifications-1}. The microphone array was
rotated with an angular velocity of $\omega=2\pi/5$~rad/sec in the
clockwise direction for three complete revolutions. The ITD was calculated
after every $1^{o}$ rotation followed by the estimation performed
using the EKF with parameters given in Table~\ref{tab:EKFparameters}.
Four different sets of experiments were performed keeping the source
at different locations. In first two sets of experiments, the source
was placed in all four quadrants including the axes at different distances,
keeping the elevation constant at $20^{o}$ and $60^{o}$. To validate
the performance of the proposed solution to the non-observability
conditions, other two sets experiments were performed by keeping the
sound source at elevation close to $0^{o}$ and $90^{o}$. The results
of the localization are presented in Tables\emph{\large{}~}\ref{tab:Experimental Results Speech}
and \ref{tab:Experimental Results Noise}. It can be seen that orientation
localization is achieved with errors less than $4^{o}$ using speech
as well as white noise sound source. Large errors are observed when
the elevation of the sound source is around $0^{o}$ and $90^{o}$.
Further, the errors with source elevation around $0^{o}$ is less
as compared to source elevation around $90^{o}$. This was achieved
by using polynomial curve fitting approach mentioned in Section~\ref{subsec:Theta Zero Deg Identification},
with $RMSE_{threshold}=1.9^{o}$, which corresponds to $\theta=15^{o}$
on the fitted curve shown in Figure~\ref{fig:Polyfit} . The value
$d_{threshold}$ was calculated as $0.017$ m (which corresponds to
$\theta=85^{o}$, thereby giving an accuracy of $5^{o}$ when the
sound source gets close to $90^{o}$ elevation) for the simulated
environment with specification given in Table~\ref{tab:Room specifications-1}. 

\subsection{Simulation Results for Distance Localization}

\begin{table}
\caption{Simulation results of distance localization using speech sound source
\label{tab:Experimental Results Distance Speech}}

\hfill{}%
\begin{tabular}{|c|c|c|c|c|c|}
\hline 
Expt. & Act. & Act. & Act. & Est. & Avg of abs \tabularnewline
No. & $\varphi$($^{o}$) & $\theta$($^{o}$) & D(m) & D(m) & error (m)\tabularnewline
\hline 
\hline 
1 a & 0 & \multirow{8}{*}{20} & 5 & 5.01 & 0.01\tabularnewline
\cline{1-2} \cline{4-6} 
1 b & 50 &  & 5 & 5.01 & 0.01\tabularnewline
\cline{1-2} \cline{4-6} 
1 c & 90 &  & 7 & 6.94 & 0.06\tabularnewline
\cline{1-2} \cline{4-6} 
1 d & 120 &  & 7 & 6.93 & 0.07\tabularnewline
\cline{1-2} \cline{4-6} 
1 e & 180 &  & 3 & 3.01 & 0.01\tabularnewline
\cline{1-2} \cline{4-6} 
1 f & -40 &  & 3 & 3.01 & 0.01\tabularnewline
\cline{1-2} \cline{4-6} 
1 g & -90 &  & 10 & 9.54 & 0.46\tabularnewline
\cline{1-2} \cline{4-6} 
1 h & -140 &  & 10 & 9.81 & 0.19\tabularnewline
\hline 
2 a & 0 & \multirow{8}{*}{60} & 5 & 5.02 & 0.02\tabularnewline
\cline{1-2} \cline{4-6} 
2 b & 50 &  & 5 & 5.02 & 0.02\tabularnewline
\cline{1-2} \cline{4-6} 
2 c & 90 &  & 7 & 6.94 & 0.06\tabularnewline
\cline{1-2} \cline{4-6} 
2 d & 120 &  & 7 & 6.94 & 0.06\tabularnewline
\cline{1-2} \cline{4-6} 
2 e & 180 &  & 3 & 3.00 & 0.00\tabularnewline
\cline{1-2} \cline{4-6} 
2 f & -40 &  & 3 & 3.01 & 0.01\tabularnewline
\cline{1-2} \cline{4-6} 
2 g & -90 &  & 10 & 9.52 & 0.48\tabularnewline
\cline{1-2} \cline{4-6} 
2 h & -140 &  & 10 & 9.41 & 0.59\tabularnewline
\hline 
3 a & 50 & 0 & 5 & 5.02 & 0.02\tabularnewline
\hline 
3 b & -120 & 4 & 7 & 6.87 & 0.13\tabularnewline
\hline 
4 a & -40 & 86 & 5 & 5.02 & 0.02\tabularnewline
\hline 
4 b & 150 & 89 & 7 & 6.83 & 0.17\tabularnewline
\hline 
\end{tabular}\hfill{}
\end{table}
\begin{table}[h]
\caption{Simulation results of distance localization using white noise sound
source \label{tab:Experimental Results Distance Noise}}

\hfill{}%
\begin{tabular}{|c|c|c|c|c|c|}
\hline 
Expt. & Act. & Act. & Act. & Est. & Avg of abs \tabularnewline
No. & $\varphi$($^{o}$) & $\theta$($^{o}$) & D(m) & D(m) & error (m)\tabularnewline
\hline 
\hline 
1 a & 0 & \multirow{8}{*}{20} & 5 & 5.01 & 0.01\tabularnewline
\cline{1-2} \cline{4-6} 
1 b & 50 &  & 5 & 5.01 & 0.01\tabularnewline
\cline{1-2} \cline{4-6} 
1 c & 90 &  & 7 & 6.92 & 0.08\tabularnewline
\cline{1-2} \cline{4-6} 
1 d & 120 &  & 7 & 6.92 & 0.08\tabularnewline
\cline{1-2} \cline{4-6} 
1 e & 180 &  & 3 & 3.01 & 0.01\tabularnewline
\cline{1-2} \cline{4-6} 
1 f & -40 &  & 3 & 3.01 & 0.01\tabularnewline
\cline{1-2} \cline{4-6} 
1 g & -90 &  & 10 & 9.52 & 0.48\tabularnewline
\cline{1-2} \cline{4-6} 
1 h & -140 &  & 10 & 9.44 & 0.56\tabularnewline
\hline 
2 a & 0 & \multirow{8}{*}{60} & 5 & 5.01 & 0.01\tabularnewline
\cline{1-2} \cline{4-6} 
2 b & 50 &  & 5 & 5.01 & 0.01\tabularnewline
\cline{1-2} \cline{4-6} 
2 c & 90 &  & 7 & 6.92 & 0.08\tabularnewline
\cline{1-2} \cline{4-6} 
2 d & 120 &  & 7 & 6.92 & 0.08\tabularnewline
\cline{1-2} \cline{4-6} 
2 e & 180 &  & 3 & 3.01 & 0.01\tabularnewline
\cline{1-2} \cline{4-6} 
2 f & -40 &  & 3 & 3.01 & 0.01\tabularnewline
\cline{1-2} \cline{4-6} 
2 g & -90 &  & 10 & 9.48 & 0.52\tabularnewline
\cline{1-2} \cline{4-6} 
2 h & -140 &  & 10 & 9.43 & 0.57\tabularnewline
\hline 
3 a & 50 & 0 & 5 & 5.01 & 0.01\tabularnewline
\hline 
3 b & -120 & 4 & 7 & 6.89 & 0.11\tabularnewline
\hline 
4 a & -40 & 86 & 5 & 5.01 & 0.01\tabularnewline
\hline 
4 b & 150 & 89 & 7 & 6.90 & 0.10\tabularnewline
\hline 
\end{tabular}\hfill{}
\end{table}
\begin{figure}
\includegraphics[clip,width=1.1\columnwidth]{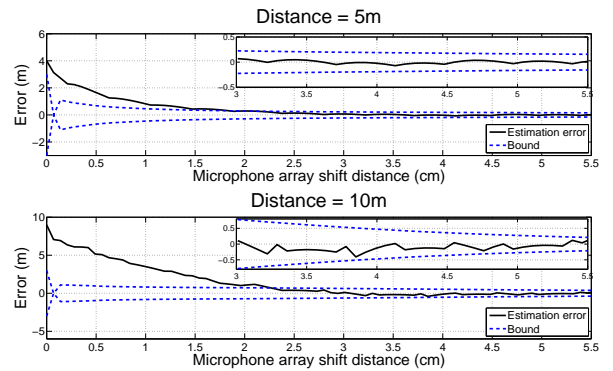}

\caption{Simulation results for distance estimation using EKF. A single sound
source was placed at two different locations with distances of $5$
m and $10$ m, respectively. The bounds represent the three standard
deviation of the estimation error.\label{fig:Distance_subplot}}
\end{figure}

Speech and white-noise sounds were also used to test the performance
of the distance localization. A single sound source was placed at
different locations and the ITD signal was recorded while the microphone
array was continuously shifted for $200$ steps each with a distance
of $\triangle d=0.0007$ m. The results are summarized in Tables~\ref{tab:Experimental Results Distance Speech}
and \ref{tab:Experimental Results Distance Noise}. The key parameters
of the EKF are given in Table~\ref{tab:EKFparameters}. The results
for the distance localization with a sound source placed at different
locations are shown in Figure\emph{\large{}~}\ref{fig:Distance_subplot}.
It is observed that the error in the estimation converges quickly
and a total shift of microphone array of approximately $3\text{ \text{cm}}$
is sufficient for the estimates to completely converge to and remain
in the three standard deviation bounds. The average of absoute error
in the estimation is found to be less than $0.6$ m in both the case
of speech as well as white noise sound sources. 

\section{Experimental Results\label{sec:Experimental-Results}}

Experiments were conducted using two different hardware platforms:
a KEMAR dummy head in a well equipped hearing laboratory and a robotic
platform equipped with a set of two rotational microphones. The following
subsections discuss the hardware platforms and the results.

\subsection{Results using KEMAR Dummy Head}

Experiments using the KEMAR dummy head were conducted in a high frequency
focused sound treated room~\cite{yost2014sound} with dimension $4.6\text{ }\text{m}$
x $3.7\text{ }\text{m}$ x $2.7\text{ }\text{m}$ as shown in Figure~\ref{fig:Setup-of-KEMAR}.
The ITD however is mostly effective for low frequency sounds below
1.5 kHz as a spatial hearing cue~\cite{middlebrooks1991sound}. The
walls, floor, and ceiling of the room were covered by polyurethane
acoustic foam with a thickness of only 5 cm which is relatively low
compared to the sound wavelength thereby making a relatively low reduction
in low and middle frequencies~\cite{beranek2012acoustics}, thereby
making it a challenging acoustic environment. For broad band noise,
T60 (i.e., the time required for the sound level to decay 60 dB~\cite{Sabine1922})
was 97 ms. In an octave band centered at 1000 Hz, T60 for the noise
was on an average of 324 ms. 
\begin{figure}[h]

\centering\includegraphics[height=0.7\columnwidth]{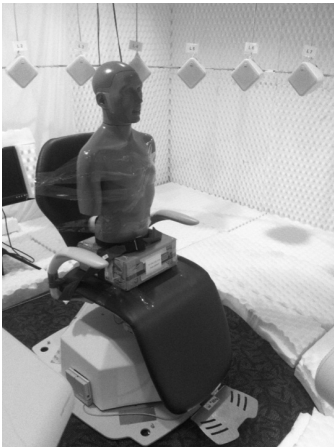}\caption{Setup of the KEMAR dummy head on a rotating chair in the middle of
the sound treated room~\cite{Sun2015}. \label{fig:Setup-of-KEMAR}}

\end{figure}

The digitally generated audio signals using a MATLAB program and three
12-channel Digital-to-Analog converters running at 44,100 cycles each
second per channel were amplified using AudioSource AMP 1200 amplifiers
before they were played from an array of 36 loudspeakers. The two
microphones were installed on the KEMAR dummy head temporarily mounted
on a rotating chair which was rotated at an approximate rate of 32\textdegree /s
for about two circles in the middle of the room. The data collected
in the second rotation was used for the EKF. Motion data was collected
by the gyroscope mounted on the top of the dummy head. The audio signals
were amplified and collected by a sound card which were then stored
on a desktop computer for further processing. The ITD was processed
with a generalized cross-correlation model~\cite{Knapp1976The} in
each time frame corresponding to the 120 Hz sampling rate of the gyroscope.
The computation was completed by a MATLAB program on a desktop computer.
Raw data with a single sound source located at four different locations
were collected.
\begin{figure}[h]
\includegraphics[bb=0cm 0cm 666bp 683bp,width=0.55\columnwidth]{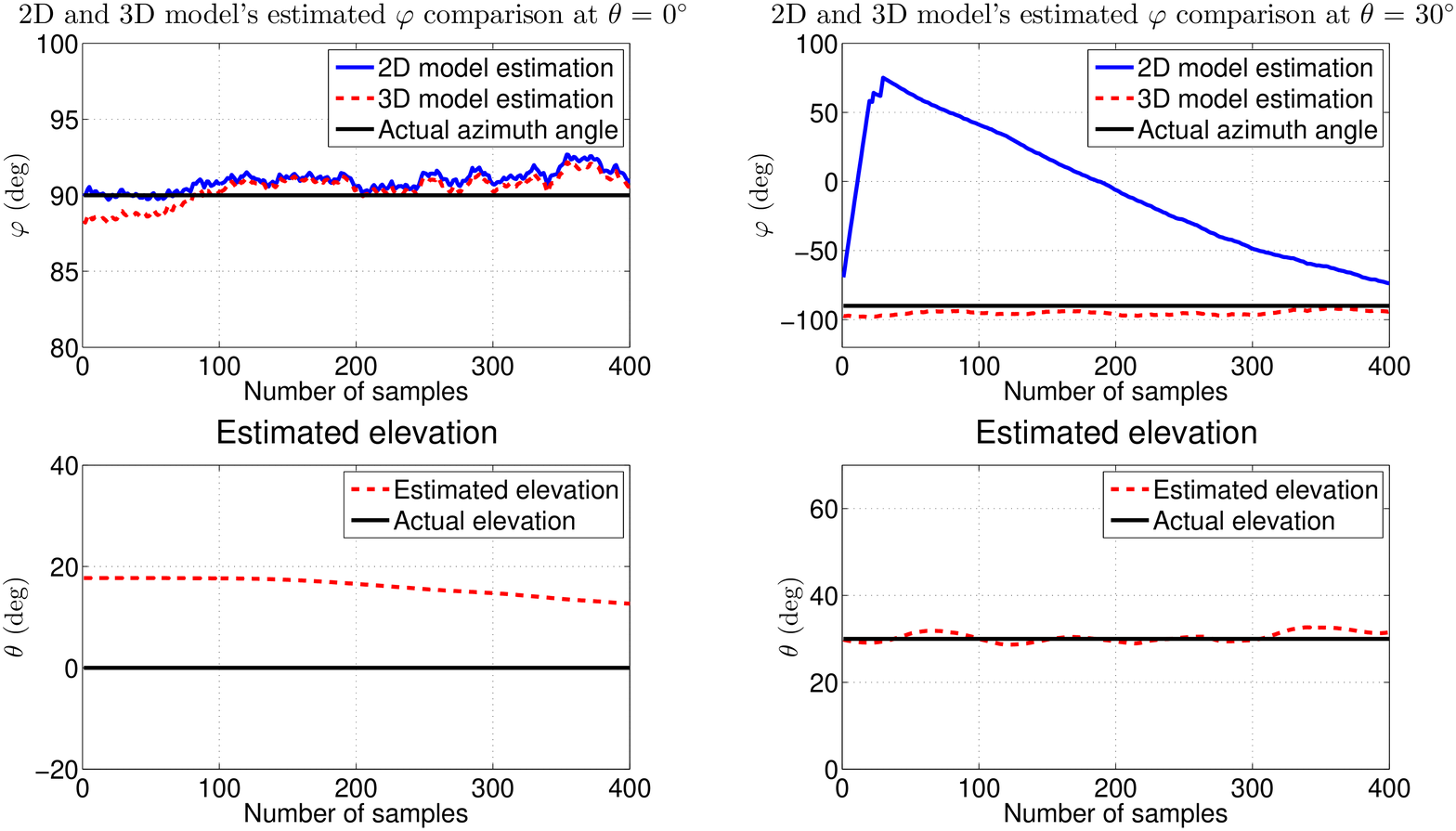}

\caption{Experimental results for orientation localization using the KEMAR
dummy head. When $\theta=0^{o}$, the azimuth estimates using the
2D and 3D models are very close (in the top-left figure), which implies
the elevation estimates are not reliable (in the bottom-left figure).
When $\theta=30^{o}$, the azimuth estimates are obviously different
(in the top-right figure), which implies reliable elevation estimates
using the 3D model (in the bottom-right figure). \label{fig: RawData subplot}}
\end{figure}

The left two subfigures in Figure~\ref{fig: RawData subplot} are
generated when the actual elevation angle is $0^{o}$. It can be seen
that the azimuth estimations using the 2D and 3D models  are very
close, which implies that the actual elevation angle is close to $0^{o}$
and the elevation estimation using the 3D model is not reliable. The
right two subfigures in Figure~\ref{fig: RawData subplot} are generated
when the actual elevation angle is $30^{o}$. It can be seen that
the azimuth estimations using the 2D and 3D localization models are
obviously different while the elevation estimation using the 3D model
is fairly accurate, which verifies the proposed algorithm shown in
Figure~\ref{fig:Flowchart combined model}. Table~\ref{tab:Raw Data results-1}
shows the estimation results obtained using the 3D localization model.
It can be seen that the RMSE of the difference between the estimated
azimuth values using respectively the 2D and 3D models works well
in checking the zero elevation condition. 
\begin{table*}[h]
\caption{Experimental results using KEMAR dummy head: Orientation localization
using the 3D model. (RMSE: difference between azimuth estimations
using the 2D and 3D models, respectively) \label{tab:Raw Data results-1}}

\hfill{}%
\begin{tabular}{|c|c|c|c|c|c|c|c|}
\hline 
Expt. & Act.  & Est. & Avg of abs  & RMSE & Act. & Est. & Avg of abs \tabularnewline
No. & $\varphi$($^{o}$) & $\varphi$($^{o}$) & error ($^{o}$) & ($^{o}$) & $\theta$($^{o}$) & $\theta$($^{o}$) & error ($^{o}$)\tabularnewline
\hline 
\hline 
1 & 90 & 91.21 & 1.21 & 1.39 & \multicolumn{1}{c|}{0} & 13.64 & 13.64\tabularnewline
\hline 
2 & -20 & -21.53 & 1.53 & 1.16 & 0 & 48.14 & 48.14\tabularnewline
\hline 
3 & 90 & 90.40 & 0.40 & 79.94 & 60 & 59.05 & 0.95\tabularnewline
\hline 
\end{tabular}\hfill{}
\end{table*}

\subsection{Results using Robotic Platform}

Experiments were also performed using a robotic platform shown in
Figure~\ref{fig:Binaural-robot.}. In these experiments, two microelectromechanical
systems (MEMS) analog/digital microphones were used for recording
the sound signal coming from the sound source. Flex adapters were
used to hold the microphones. The angular speed of the rotation of
the microphone array was controlled by a bipolar stepper motor with
gear ratio adjusted to $0.9^{o}$ per step. The stepper motor was
controlled by an Arduino microprocessor. The distance between two
microphones was kept constant as $0.3\text{ }\text{m}$. An audio
(music) was played in a loud speaker which was used as a sound source
kept at different locations. The estimation results are shown in Figure~\ref{fig:OurRobot_subplot}
and Table~\ref{tab:Our robot results-1}.

\begin{figure}[h]
\centering\includegraphics[width=0.7\columnwidth]{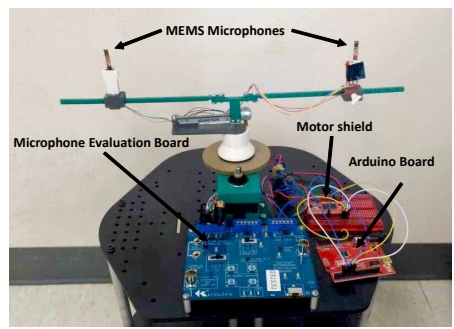}\caption{A two-microphone system is equipped on a ground robot.\label{fig:Binaural-robot.}}
\end{figure}
\begin{figure}[h]
\includegraphics[bb=0bp 0bp 936bp 746bp,width=0.8\columnwidth]{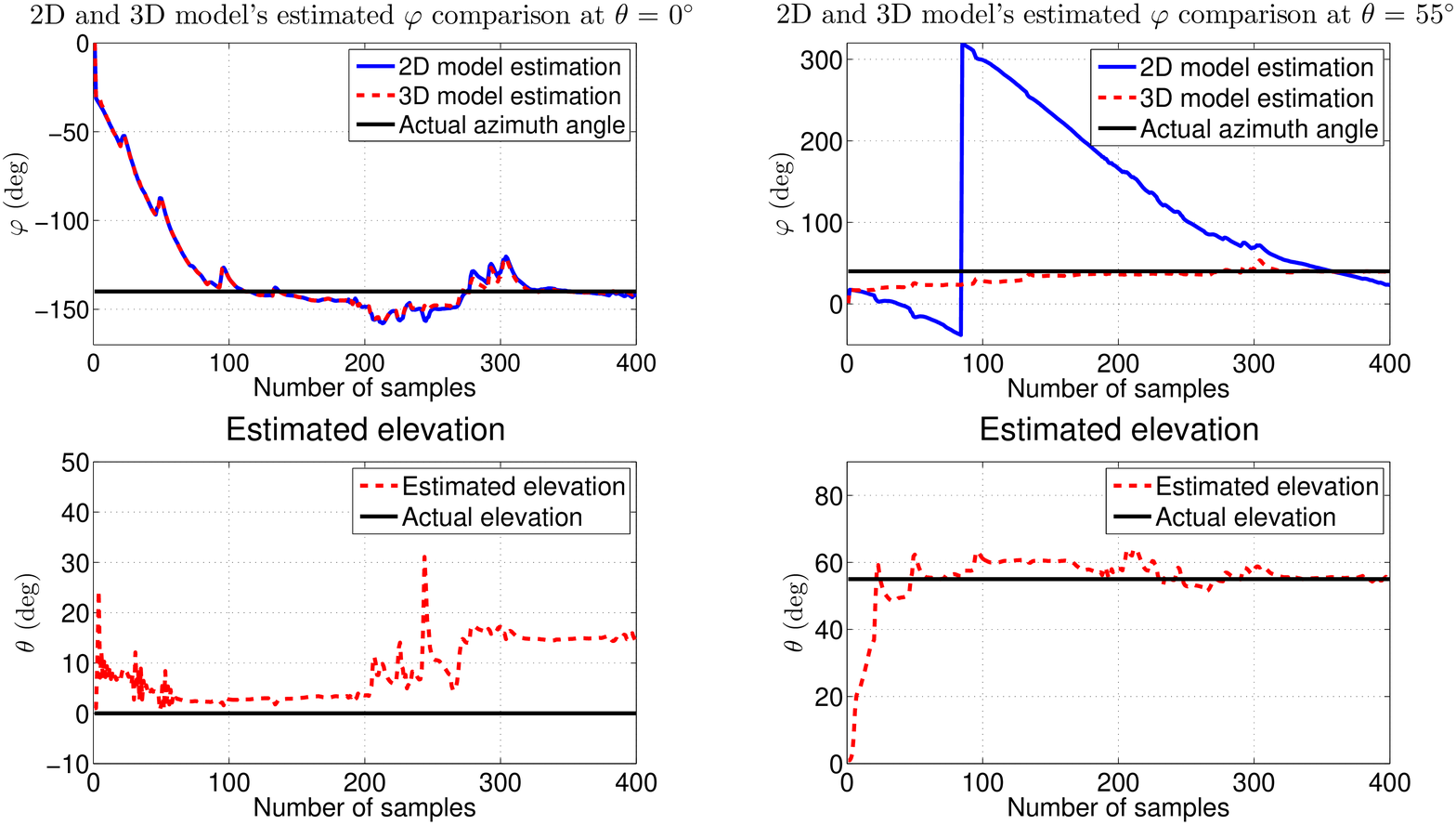}

\caption{Experimental results for orientation localization using the robotic
platform. When $\theta=0^{o}$, the azimuth estimates using the 2D
and 3D models are very close (in the top-left figure), which implies
the elevation estimates are not reliable (in the bottom-left figure).
When $\theta=55^{o}$, the azimuth estimates are obviously different
(in the top-right figure), which implies reliable elevation estimates
using the 3D model (in the bottom-right figure).\label{fig:OurRobot_subplot}}
\end{figure}
\begin{table*}[h]
\caption{Experimental results using the robotic platform: Orientation localization
using 3D model (RMSE: difference between azimuth estimations using
the 2D and 3D models, respectively) \label{tab:Our robot results-1}}

\hfill{}%
\begin{tabular}{|c|c|c|c|c|c|c|c|}
\hline 
Expt. & Act.  & Est. & Avg of abs  & RMSE & Act. & Est. & Avg of abs \tabularnewline
No. & $\varphi$($^{o}$) & $\varphi$($^{o}$) & error ($^{o}$) & ($^{o}$) & $\theta$($^{o}$) & $\theta$($^{o}$) & error ($^{o}$)\tabularnewline
\hline 
\hline 
1  & -140 & -140.65 & 0.65 & 0.72 & \multicolumn{1}{c|}{0} & 14.96 & 14.96\tabularnewline
\hline 
2 & 180 & 178.71 & 1.29 & 0.69 & 5 & 11.59 & 6.59\tabularnewline
\hline 
3 & 40 & 39.67 & 0.33 & 8.80 & 55 & 55.24 & 0.24\tabularnewline
\hline 
4 & 40 & 38.20 & 1.80 & 10.96 & 65 & 64.67 & 0.33\tabularnewline
\hline 
\end{tabular}\hfill{}
\end{table*}

It can be seen that the azimuth estimations using the 2D and 3D models
shown in the top-left subfigure in Figure~\ref{fig:OurRobot_subplot}
generated when the actual elevation angle is $0^{o}$ are very close,
which implies that the elevation is close to $0^{o}$ and the elevation
estimation shown in the bottom-left subfigure in Figure~\ref{fig:OurRobot_subplot}
using the 3D localization model is not reliable. However, the two
subfigures on the right in Figure~\ref{fig:OurRobot_subplot} are
generated by keeping the sound source at an elevation angle of $55^{o}$.
As proposed in the algorithm shown in Figure~\ref{fig:Flowchart combined model},
the azimuth estimations using the 2D and 3D localization models are
different while the elevation estimation using the 3D model is fairly
accurate. Table~\ref{tab:Our robot results-1} shows the estimation
results obtained using the 3D localization model. It can be seen that
the zero elevation condition can be checked using the RMSE of the
difference between the estimated azimuth values using respectively
the 2D and 3D models.

A fitted curve similar to one shown in the Figure~\ref{fig:Polyfit}
can be generated for the environment by keeping the sound source at
different elevation angles and recording the $RMSE$ values between
$\varphi_{2D}$ and $\varphi_{3D}$ estimations. The value of the
parameter $RMSE_{threshold}$ can be decided, which can be used to
check the $\theta=0^{o}$ scenario. Further, the generated fitted
curve can be used to give a closer estimation of the elevation angle.

\section{Conclusion\label{sec:Conclusion-and-Future}}

This paper presents a novel technique that performs a complete localization
(i.e., both orientation and distance) of a stationary sound source
in a three-dimensional (3D) space. Two singular conditions when unreliable
orientation localization (the elevation angle equals $0$ or $90^{o}$)
occurs were found by using the observability theory. The root-mean-squared
error (RMSE) value of the difference between the azimuth estimates
using respectively the 2D and 3D models was used to check the $0^{o}$
elevation condition and the elevation was further estimated using
a polynomial curve fitting technique. The $90^{o}$ elevation was
detected by checking zero-ITD signal. Based on an accurate orientation
localization, the distance localization was done by first rotating
the microphone array to face toward the sound source and then shifting
the microphones perpendicular to the source-robot vector by a distance
of a fixed number of steps. Under challenging acoustic environments
with relatively low-energy targets and high-energy noise, high localization
accuracy was achieved in both simulations and experiments. The mean
of the average of absolute estimation error was less than $4^{o}$
for angular localization and less than $0.6$ m for distance localization
in simulation results and techniques to detect $\theta=0^{o}$ and
$90^{o}$ are verified in both simulation and experimental results. 
\begin{acknowledgements}
Acknowledgment
\end{acknowledgements}

The authors would like to thank Dr. Xuan Zhong for providing with
the experimental raw data using the KEMAR dummy head.

\bibliographystyle{../SpringerTemplate/spmpsci}
\bibliography{References}

\end{document}